\documentclass[10pt]{article}
\usepackage{epsf}
\usepackage{amsmath}

\allowdisplaybreaks

\usepackage[showframe=false]{geometry}
\usepackage{changepage}

\usepackage{epsfig}
\usepackage{amssymb}

\usepackage{amsthm}
\usepackage{setspace}
\usepackage{cite}
\usepackage{mcite}

\usepackage{algorithmic}  % This package provides an algorithmic environment fo describing algorithms.
\usepackage{algorithm}

\usepackage{shadow}
\usepackage{fancybox}
\usepackage{fancyhdr}

\usepackage{color}
\usepackage[usenames,dvipsnames,svgnames,table]{xcolor}

\definecolor{mypurple}{rgb}{.4,.0,.5}
%\newcommand{\prp}[1]{\textcolor{mypurple}{#1}}

%%%%%%%%%%%%%%%%%%%%%%%%%%%%%%%%%%%%%%%%%%%%%%%%%%%%%%%
%%%%%%%%%%%%%%%%%%%%%%%%%%%%%%%%%%%%%%%%%%%%%%%%%%%%%%%
%%%%%%%%%%%%%%%%%%%%%%%%%%%%%%%%%%%%%%%%%%%%%%%%%%%%%%%

\usepackage{xcolor}

\usepackage{color}

\definecolor{darkgreen}{rgb}{0, 0.4,0}
\newcommand{\dgr}[1]{\textcolor{darkgreen}{#1}}

\definecolor{purplebrown}{rgb}{0.5,0.1,0.6}

\newcommand{\bl}[1]{\textcolor{blue}{#1}}

\definecolor{shadebrown}{rgb}{0.1,0.1,0.9}
\definecolor{lightblue}{rgb}{0.2,0,1}

%\setbeamercolor{itemize item}{fg=orange, bg=yellow}

\usepackage{fancybox}
\usepackage{graphicx}
\usepackage{epstopdf}
\usepackage{epsfig}
\usepackage{wrapfig}
\usepackage{subfigure}

\usepackage{xcolor}

\usepackage{tcolorbox}
\tcbuselibrary{skins}

%\newtcbox{\xmybox}{on line,
%arc=7pt,
%before upper={\rule[-3pt]{0pt}{10pt}},boxrule=1.1pt,
%boxsep=0pt,left=6pt,right=6pt,top=0pt,bottom=0pt,enhanced, frame style image=blueshade.png,interior style image=goldshade.png}

%%%%%%%%%%%%%%%%%xmybox
\newtcbox{\xmybox}{on line,
arc=7pt,
before upper={\rule[-3pt]{0pt}{10pt}},boxrule=0pt,
boxsep=0pt,left=6pt,right=6pt,top=0pt,bottom=0pt,enhanced, coltext=blue, colback=white!10!yellow}

\newtcbox{\xmyboxa}{on line,
arc=7pt,
before upper={\rule[-3pt]{0pt}{10pt}},boxrule=0pt,
boxsep=0pt,left=6pt,right=6pt,top=0pt,bottom=0pt,enhanced, colback=white!10!yellow}

\newtcbox{\xmyboxb}{on line,
arc=7pt,
before upper={\rule[-3pt]{0pt}{10pt}},boxrule=1pt,colframe=darkgreen!100!blue,
boxsep=0pt,left=6pt,right=6pt,top=0pt,bottom=0pt,enhanced, colback=white!10!yellow}

\newtcbox{\xmyboxc}{on line,
arc=7pt,
before upper={\rule[-3pt]{0pt}{10pt}},boxrule=.7pt,colframe=blue!100!blue,
boxsep=0pt,left=6pt,right=6pt,top=0pt,bottom=0pt,enhanced, coltext=blue, colback=white!10!yellow}

%%%%%%%%%%%%%%xmytbox
\newtcbox{\xmytboxa}{on line,
arc=7pt,
before upper={\rule[-3pt]{0pt}{10pt}},boxrule=.0pt,colframe=pink!50!yellow,
boxsep=0pt,left=6pt,right=6pt,top=0pt,bottom=0pt,enhanced, coltext=white, colback=blue!40!red}

\newtcbox{\xmytboxb}{on line,
arc=7pt,
before upper={\rule[-3pt]{0pt}{10pt}},boxrule=.0pt,colframe=pink!50!yellow,
boxsep=0pt,left=6pt,right=6pt,top=0pt,bottom=0pt,enhanced, coltext=white, colback=white!40!green}

%%%%%%%%%%%%%%%%%%%%%%%%%%%%%%%%%%%%%%%%%%%%%%%%%%%%%%%
%%%%%%%%%%%%%%%%%%%%%%%%%%%%%%%%%%%%%%%%%%%%%%%%%%%%%%%
%%%%%%%%%%%%%%%%%%%%%%%%%%%%%%%%%%%%%%%%%%%%%%%%%%%%%%%

%%%%%%%%%%%%%%%%%%%%%%%%%%%%%%%%%%%%%%%%%%%%%%%%%%%%%%%%%%%%%%%%%%%%%%%%%%
\usepackage[hyphens]{url}

\usepackage[colorlinks=true,
            linkcolor=black,
            urlcolor=blue,
            citecolor=purple]{hyperref}

\usepackage{breakurl}
%%%%%%%%%%%%%%%%%%%%%%%%%%%%%%%%%%%%%%%%%%%%%%%%%%%%%%%%%%%%%%

\def\y{{\bf y}}
\def\v{{\bf v}}
\def\x{{\bf x}}

% Example definitions.
% --------------------
\def\x{{\mathbf x}}

\def\v{{\bf v}}
\def\x{{\bf x}}
\def\y{{\bf y}}
\def\z{{\bf z}}

\def\h{{\bf h}}

\def\be{\begin{equation}}
\def\ee{\end{equation}}
\def\ba{\left[\begin{array}}
\def\ea{\end{array}\right]}

\def\v{{\bf v}}
\def\x{{\bf x}}
\def\y{{\bf y}}
\def\z{{\bf z}}

\def\1{{\bf 1}}

\def\g{{\bf g}}
\def\0{{\bf 0}}

\def\erfinv{\mbox{erfinv}}
\def\erf{\mbox{erf}}
\def\erfc{\mbox{erfc}}

%%%%%%%%%%strong

%%%%%%%%%%%%%weak

%%%%%%%%%%%%%%%%%%%%%%sec

%%%%%%%%%%%%%weaksigned

%%%%%%%%%%%%%weak

%\def\hw{\bar{\H}}

%%%%%%%%%%%%%%%%%%%%%%math R E
\def\mR{{\mathbb R}}

\def\mE{{\mathbb E}}

\newtheorem{theorem}{Theorem}

\setlength{\oddsidemargin}{0in} \setlength{\evensidemargin}{0in}
\setlength{\textwidth}{6.5in} %old value 6.5in
\setlength{\textheight}{9in} %old value 8.6in
\setlength{\topmargin}{-0.25in}

\begin{document}

\begin{singlespace}

\title {Controlled Loosening-up (CLuP) -- achieving \emph{exact} MIMO ML in polynomial time %A tight variant of Gordon's escape through a mesh theorem
%\footnote{ This work was supported in
%part.}
}
\author{
\textsc{Mihailo Stojnic
\footnote{e-mail: {\tt flatoyer@gmail.com}} }}
\date{}
\maketitle

\centerline{{\bf Abstract}} \vspace*{0.1in}

In this paper we attack one of the most fundamental signal processing/informaton theory problems, widely known as the MIMO ML-detection. We introduce a powerful Random Duality Theory (RDT) mechanism that we refer to as the Controlled Loosening-up (CLuP) as a way of achieving the \textbf{\emph{exact}} ML-performance in MIMO systems in polynomial time. We first outline the general strategy and then discuss the rationale behind the entire concept. A solid collection of results obtained through numerical experiments is presented as well and found to be in an excellent agreement with what the theory predicts. As this is the introductory paper of a massively general concept that we have developed, we mainly focus on keeping things as simple as possible and put the emphasis on the most fundamental ideas. In our several companion papers we present various other complementary results that relate to both, theoretical and practical aspects and their connections to a large collection of other problems and results that we have achieved over the years in Random Duality.

\vspace*{0.25in} \noindent {\bf Index Terms: ML - detection; MIMO systems; Algorthms; Random duality theory}.

\end{singlespace}

%%%%%%%%%%%%%%%%%%%%%%%%%%%%%%%%%%%%%%%%%%%%%%%%%%%%%%%%%%%%%%%%%
\section{Introduction}
\label{sec:back}
%%%%%%%%%%%%%%%%%%%%%%%%%%%%%%%%%%%%%%%%%%%%%%%%%%%%%%%%%%%%%%%%%

The MIMO ML-detection is one of the most fundamental open problem at the intersection of a variety of scientific fields, most notably, the information theory, signal processing, statistics, and algorithmic optimization. Due to its enormous popularity it basically needs no introduction and we will consequently try to skip as much of unnecessary repetitive introductive detailing as possible and focus only on the key points. To that end we start with a MIMO linear system which is typically modeled in the following way:
\begin{eqnarray}\label{eq:linsys1}
\y=A\x_{sol}+\sigma\v.
\end{eqnarray}
In (\ref{eq:linsys1}) $\x_{sol}\in\mR^n$ is the input vector of the system, $A\in\mR^{m\times n}$ is the system matrix, $\v\in\mR^m$ is the noise vector at the output of the system scaled by a factor $\sigma$, and $\y\in\mR^m$ is the output vector of the system. For example, in information theory a multi-antenna system is typically modelled through (\ref{eq:linsys1}). In such a system $n$ is the number of the transmitting antennas, $m$ is the number of the receiving antennas, $\x_{sol}$ is the transmitted signal vector, $A$ is the channel matrix, $\v$ is the noise vector at the receiving antennas, and $\y$ is the vector that is finally received. In this paper we will focus on a statistical and large dimensional MIMO setup which is also very typical in various applications in communications, control, and information theory. Namely, we will assume that the elements of both, $A$ and $\v$, are i.i.d. standard normals and that both, $n$ and $m$, are large so that $m=\alpha n$ where $\alpha>0$ is a real number. Moreover, we will consider the so-called \emph{coherent} scenario where the matrix $A$ is known at the receiving end and one wonders how the transmitted signal $\x$ can be estimated given $\y$ and $A$. In such a scenario one then typically relies on the so-called ML estimate which, due to the Gaussianity of $\v$, effectively boils down to solving the following problem
\begin{eqnarray}\label{eq:ml1}
\hat{\x}=\min_{\x\in{\cal X}}\|\y-A\x\|_2,
\end{eqnarray}
where ${\cal X}$ stands for the set of permissible $\x$. For the simplicity of the exposition we will assume the standard binary scenarios, i.e. ${\cal X}=\{-\frac{1}{\sqrt{n}},\frac{1}{\sqrt{n}}\}^n$. However, we do mention that the mechanisms that we present below can easily be adapted to fit various other scenarios as well.

The optimization in (\ref{eq:ml1}) is of course well known and belongs to the class of the least-squares problems often seen in many fields ranging from say statistics and machine learning to information theory, communications, and control. Over last several decades in many of these fields various different techniques have been developed to attack these problems. The level of difficulty of these problems is different from one field to another and it depends to a large degree on the structure of set ${\cal X}$. The experienced reader will immediately recognize that the above assumed structure of ${\cal X}$ makes the problem that we will consider in this paper notoriously hard and quite likely among the hardest widely popular simple to state algorithmic problems. The key difficulty of course comes from the fact that the set ${\cal X}$ is discrete and known continuous optimization techniques that run  in an acceptable computational time (say polynomial) typically fail to solve such problems \textbf{\emph{exactly}}. Still, even this particular version of the problem has been the subject of an extensive research over last several decades and there has been a lot of great work that was done to improve its general understanding. We leave the details of all the prior work to survey papers and here focus only on a couple of papers that are most directly related.

As an alternative to continuous heuristics that typically solve the problem only approximatively (therefore inducing an additional residual error in the estimated $\hat{\x}$), in our own line of work initiated in \cite{StojnicBBSD08,StojnicBBSD05} we approached the problem looking for the \textbf{\emph{exact}} solutions. We designed a branch-and-bound procedure that substantially improved over the state of the art so-called Sphere-decoder (SD) algorithm of \cite{FinPhoSD85,HassVik05,JalOtt05}. As a tree-search algorithm, it had as its best feature the ability to prune the search tree way more significantly than the original SD. That of course substantially dropped the computational complexity and brought it to be close to polynomial in a wide range of systems parameters. Still, breaking the exponential/polynomial barrier remained as an unreachable goal. This barrier is precisely what we attack below. However, as it will soon become clear, some of the ideas will have certain connections to the roots of the main ideas that we introduced in  \cite{StojnicBBSD08,StojnicBBSD05} but the key components are actually completely different and will in fact mostly rely on the very powerful concept called \bl{\textbf{Random Duality Theory}} (RDT) that we designed for handling a large class of optimization problems, among many of them the well-known LASSO/SOCP variants of (\ref{eq:ml1}) (see, e.g. \cite{StojnicGenLasso10,StojnicGenSocp10,StojnicPrDepSocp10}), typically seen in various settings in statistics, compressed sensing, and machine learning (see also, e.g. \cite{CheDon95,Tibsh96,DonMalMon10,BunTsyWeg07,vandeGeer08,MeinYu09}).

The presentation below will be split into several main parts. We will first introduce the main algorithm that will be utilized for solving (\ref{eq:ml1}). In the second part we will discuss its performance and the rationale behind the algorithm's structure. Finally, in the third part we will provide a substantial set of numerical results, both theoretical and practical, that will demonstrate the full power of the introduced concepts.

%%%%%%%%%%%%%%%%%%%%%%%%%%%%%%%%%%%%%%%%%%%%%%%%%%%%%%%%%%%%%%%%%
\section{Controlled Loosening-up (CLuP)}
\label{sec:clup}
%%%%%%%%%%%%%%%%%%%%%%%%%%%%%%%%%%%%%%%%%%%%%%%%%%%%%%%%%%%%%%%%%

Let $\x^{(0)}$ be a randomly generated vector from ${\cal X}=\{-\frac{1}{\sqrt{n}},\frac{1}{\sqrt{n}}\}^n$. We consider the following iterative procedure to solve (\ref{eq:ml1}):
\begin{eqnarray}
\x^{(i+1)}=\frac{\x^{(i+1,s)}}{\|\x^{(i+1,s)}\|_2} \quad \mbox{with}\quad \x^{(i+1,s)}=\mbox{arg}\min_{\x} & & -(\x^{(i)})^T\x  \nonumber \\
\mbox{subject to} & & \|\y-A\x\|_2\leq r\nonumber \\
&& \x\in \left [-\frac{1}{\sqrt{n}},\frac{1}{\sqrt{n}}\right ]^n, \label{eq:clup1}
\end{eqnarray}
where $r$ is a carefully chosen \emph{radius}. We will refer to the above procedure as Controlled Loosening-up (CLuP). The procedure looks incredibly simple and one immediately wonders why it would have any chance to be successful. The general answer is very complicated but here we will just briefly hint at why it actually may be a good idea to use the above procedure. First, in the constraint set of the inner optimization one recognizes a problem that in a way resembles the so-called \textbf{\emph{polytope}} relaxation that we actually introduced as a first step (and later a lower bounding technique) in the branch-and-bound mechanism in \cite{StojnicBBSD05,StojnicBBSD08}. One should of course immediately note a couple of important points. First, it is not the polytope relaxation itself but rather a specifically constrained problem that has the discrete $n$-cube vertices set relaxed to a polytope (basically a full $n$-cube). Second, when we introduced the branch-and-bound mechanism in \cite{StojnicBBSD05,StojnicBBSD08} we immediately recognized that the polytope relaxation is a nice heuristic but on its own essentially hopeless when it comes to finding the \textbf{\emph{exact}} solution of (\ref{eq:ml1}). Here though the idea is completely different. The discrete set is relaxed to a convex continuous one so that the optimization in (\ref{eq:clup1}) can be solved quickly in polynomial time. The key point is in carefully choosing $r$ and hoping that such a careful choice may eventually lead to an ML solution.

\begin{algorithm}[t]
\caption{Controlled Loosening-up (CLuP -- achieving \textbf{exact} ML in polynomial time)}

{\bf Input:} Received vector $\y \in \mR^m$, system matrix $A\in \mR^{m\times n}$, radius $r$, starting unknown vector $\x^{(0)}\in\mR^n$, set of additional (convex) constraints ${\cal A}(\x)$ (empty set is fine as well), maximum number of iterations $i_{max}$, desired converging precision $\delta_{min}$.[\bl{$\mbox{CLuP}(\y,A,r,\x^{(0)},{\cal A}(\x),i_{max},\delta)$}] \\
{\bf Output:} Estimated vector $\x^{(i)} \in \mR^n$ and its discretized  variant $\x^{(CLuP)}$.[\bl{$\x^{(i)},\x^{(CLuP)}$}]

\begin{algorithmic}[1]

\STATE Initialize the convergence gap and the iteration counter, $\delta\leftarrow 10^{10}$ and $i\leftarrow 0$

\STATE Set $c_2^{(0)}\leftarrow \delta^2$

\WHILE{$i+1\leq i_{max}$ and/or $\delta\geq\delta_{min}$}

\STATE Obtain $\x^{(i+1,s)}$ as the optimal solution of the following convex optimization problem
\begin{eqnarray}
\x^{(i+1,s)}=\mbox{arg}\min_{\x} & & -(\x^{(i)})^T\x  \nonumber \\
\mbox{subject to} & & \|\y-A\x\|\leq r\nonumber \\
&& \x\in \left [-\frac{1}{\sqrt{n}},\frac{1}{\sqrt{n}}\right ]^n\nonumber \\
& & {\cal A}(\x).\nonumber
\end{eqnarray}

\STATE Set
\begin{eqnarray}
\x^{(i+1)}=\frac{\x^{(i+1,s)}}{\|\x^{(i+1,s)}\|_2}.  \nonumber
\end{eqnarray}

\STATE Set $c_2^{(i+1)}\leftarrow (-(\x^{(i)})^T\x^{(i+1,s)})^2$

\STATE Set $\delta\leftarrow |\sqrt{c_2^{(i+1)}}-\sqrt{c_2^{(i)}}|$

\STATE Update the iteration counter $i\leftarrow i+1$

\ENDWHILE

\STATE $\x^{(CLuP)}\leftarrow \frac{1}{\sqrt{n}}\mbox{sign}(\x^{(i)})$.

\end{algorithmic}

\end{algorithm}

Before, moving further with the discussion related to the choice of $r$ we in Figure \ref{fig:fighighlightclup} highlight the performance of the CLuP algorithm introduced above. We chose, $\alpha=0.8$. The experienced reader will already here recognize that with this choice we are already getting into the regimes where the MIMO ML-detection problem starts to become very difficult and where the know techniques might start having problems trying to reach not only the exact solution but even a good approximate one. It is probably needless to say that as $\alpha$ decreases the problem becomes harder and harder and for $\alpha\rightarrow 0$ approaches one of the hardest well-known optimization problems where hardly any solving technique is known to be of much use.

In addition to the plot that corresponds to the CLuP's probability of error we also showed the probability of errors of typical convex relaxation based heuristics, as well as the estimate for the ML. We chose the three probably most popular convex relaxation heuristics, the Ball-relaxtion, the Polytope-relaxation, and the SDP-relaxation. These are, of course, well known techniques in the optimization theory (see, e.g. \cite{GolVanLoan96Book,GroLovSch93Book,vanMaarWar00,GoeWill95}) and we considered them as the starting points and later on as the lower-bounding techniques of the branch-and-bound mechanism that we designed in \cite{StojnicBBSD05,StojnicBBSD08} for attacking on the so-called \textbf{exact} level this very same MIMO ML-detection problem. Although it is very well known we recall that: 1) the Ball heuristic relaxes ${\cal X}$ to the unit $n$-dimensional ball, 2) the Polytpe heuristic relaxes ${\cal X}$ to the unit cube, and 3) the SDP heuristic relaxes $\begin{bmatrix}\x \\1/\sqrt{n}\end{bmatrix}\begin{bmatrix}\x \\1/\sqrt{n}\end{bmatrix}^T$ to a full rank $n$-scaled unit diagonal positive semi-definite matrix. Of course, there are many other more sophisticated relaxations that one can quickly design. As this paper does not have heuristic type of approach as its main topic we selected the above three as historically and conceptually probably the most relevant ones.
\begin{figure}[htb]
%\begin{minipage}[b]{.5\linewidth}
\centering
\centerline{\epsfig{figure=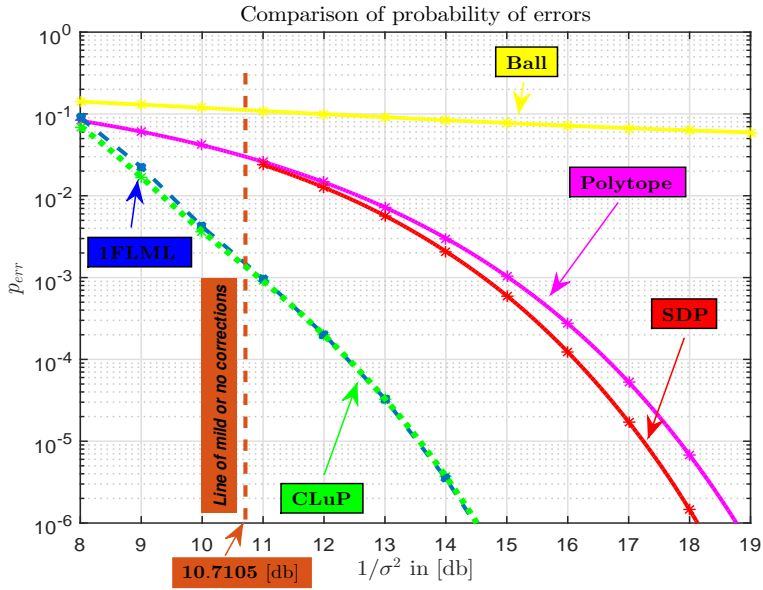,width=11.5cm,height=8cm}}
%\end{minipage}
%\begin{minipage}[b]{.5\linewidth}
%\centering
%\centerline{\epsfig{figure=finprerral08.eps,width=9cm,height=6.5cm}}
%\end{minipage}
\caption{Comparison of $p_{err}$ as a function of $1/\sigma^2$; $\alpha=0.8$}
\label{fig:fighighlightclup}
\end{figure}
With the appearance of the \bl{\textbf{Random Duality Theory}} (RDT) calculating the performance characterizations of all these convexity based techniques is relatively simple and the resulting plots are given in Figure \ref{fig:fighighlightclup} (estimating the ML performance though is a bit more complicated task and we will discuss some of its intricacies below). From Figure \ref{fig:fighighlightclup} one can expect CLuP to substantially outperform the convexity based techniques. The appearance of the so-called vertical line of corrections already right here at the beginning indicates that things are not as simple as the algorithm's structure and these plots make them to be. It is of course impossible to understand the meaning of this line right here. We just mention in passing that we will have a whole lot more to say about it later on. For the time being though, one can simply think of the SNR regimes above the line as the ones of main interest (where the probabilities of error start to rapidly go down) and where things are likely to be indeed as simple as the structure of the algorithm and the plots make them to be.

%%%%%%%%%%%%%%%%%%%%%%%%%%%%%%%%%%%%%%%%%%%%%%%%%%%%%%%%%%%%%%%%%
\subsection{Choosing $r$}
\label{sec:choosingr}
%%%%%%%%%%%%%%%%%%%%%%%%%%%%%%%%%%%%%%%%%%%%%%%%%%%%%%%%%%%%%%%%%

It is rather simple to see that the above CLuP procedure will converge. To simplify writing we will assume that the converging solution is $\x$ and look at the structure of the resulting ending optimization
\begin{eqnarray}
\min_{\x} & & -\|\x\|_2  \nonumber \\
\mbox{subject to} & & \|\y-A\x\|_2\leq r\nonumber \\
&& \x\in \left [-\frac{1}{\sqrt{n}},\frac{1}{\sqrt{n}}\right ]^n. \label{eq:clup2}
\end{eqnarray}
To characterize the performance of the above optimization we of course rely on the Random Duality Theory (RDT) that we have developed in a long line of work \cite{StojnicCSetam09,StojnicCSetamBlock09,StojnicISIT2010binary,StojnicDiscPercp13,StojnicUpper10,StojnicGenLasso10,StojnicGenSocp10,StojnicPrDepSocp10,StojnicRegRndDlt10,Stojnicbinary16fin,Stojnicbinary16asym}. Before formally redoing the RDT steps we note that (\ref{eq:clup2}) is structurally the same problem as the one in \cite{StojnicGenLasso10,StojnicGenSocp10,StojnicPrDepSocp10,Stojnicbinary16fin,Stojnicbinary16asym} with a tiny change in the set of constraints. Moreover, the same set of constraints we have already considered in \cite{StojnicISIT2010binary,StojnicDiscPercp13}. As was the case in \cite{StojnicGenLasso10,StojnicGenSocp10,StojnicPrDepSocp10,Stojnicbinary16fin,Stojnicbinary16asym,StojnicISIT2010binary,StojnicDiscPercp13} we will again without a loss of generality assume that $\x_{sol}$ has a particular structure. Here, we will say that its all components are equal to $\frac{1}{\sqrt{n}}$. We will also set,
\begin{eqnarray}
c_2 & = & \|\x\|_2^2\nonumber \\
c_1 & = & (\x_{sol})^T\x, \label{eq:clup3}
\end{eqnarray}
and rewrite (\ref{eq:clup2}) in the following way
\begin{eqnarray}
\min_{\x} & & -\|\x\|_2  \nonumber \\
\mbox{subject to} & & \|[A \v]\begin{bmatrix}\x_{sol}-\x\\\sigma\end{bmatrix}\|_2\leq r\nonumber \\
&& \x\in \left [-\frac{1}{\sqrt{n}},\frac{1}{\sqrt{n}}\right ]^n. \label{eq:clup4}
\end{eqnarray}
The Lagrange dual of the above problem can be written as
\begin{eqnarray}
\min_{\x} \max_{\gamma_1} & & -\|\x\|_2 +\gamma_1\left (\mbox{max}_{\|\lambda\|_2=1}\lambda^T\left ([A \v]\begin{bmatrix}\x_{sol}-\x\\\sigma\end{bmatrix}\right )- r\right ) \nonumber \\
&& \x\in \left [-\frac{1}{\sqrt{n}},\frac{1}{\sqrt{n}}\right ]^n, \label{eq:clup5}
\end{eqnarray}
and relying on the concentration of $\gamma_1$ as
\begin{eqnarray}
\max_{\gamma_1}\min_{\x}\max_{\|\lambda\|_2=1}  & & -\|\x\|_2 +\gamma_1\lambda^T\left ([A \v]\begin{bmatrix}\x_{sol}-\x\\\sigma\end{bmatrix}\right )- \gamma_1r \nonumber \\
&& \x\in \left [-\frac{1}{\sqrt{n}},\frac{1}{\sqrt{n}}\right ]^n. \label{eq:clup5a}
\end{eqnarray}
One can then apply the RDT and proceed in a standard fashion that we outlined in \cite{StojnicCSetam09,StojnicCSetamBlock09,StojnicISIT2010binary,StojnicDiscPercp13,StojnicUpper10,StojnicGenLasso10,StojnicGenSocp10,StojnicPrDepSocp10,StojnicRegRndDlt10,Stojnicbinary16fin,Stojnicbinary16asym}. For the time being we will skip doing that and defer such a discussion for one of the later sections. Here, we will instead rely on the results that we have already created and quickly establish the solution by maximizing $c_2=\|\x\|_2^2$ so that the objective of
\begin{eqnarray}
\min_{\|\x\|_2^2=c_2}\max_{\|\lambda\|_2=1} & & \lambda^T\left ([A \v]\begin{bmatrix}\x_{sol}-\x\\\sigma\end{bmatrix} \right ) \nonumber \\
\mbox{subject to} & & \x\in \left [-\frac{1}{\sqrt{n}},\frac{1}{\sqrt{n}}\right ]^n, \label{eq:clup6}
\end{eqnarray}
remains below $r$. The main point is that the optimization in (\ref{eq:clup6}) is virtually identical to the one already considered in \cite{StojnicDiscPercp13}. For the purpose of tracking all the relevant quantities we will actually make it slightly different by adding the above mentioned $c_1=(\x_{sol})^T\x$ constraint to obtain
\begin{eqnarray}
\max_{c_2}\min_{c_1}\min_{\x}\max_{\|\lambda\|_2=1} & & \lambda^T\left ([A \v]\begin{bmatrix}\x_{sol}-\x\\\sigma\end{bmatrix} \right ) \nonumber \\
\mbox{subject to} & & \x\in \left [-\frac{1}{\sqrt{n}},\frac{1}{\sqrt{n}}\right ]^n\nonumber \\
& & (\x_{sol})^T\x=c_1\nonumber \\
& &  \|\x\|_2^2=c_2. \label{eq:clup7}
\end{eqnarray}
Before proceeding with the RDT details we will also find it convenient to define
\begin{eqnarray}
\xi_{p}(\alpha,\sigma;c_2,c_1)\triangleq \lim_{n\rightarrow\infty}\frac{1}{\sqrt{n}}\mE \min_{\x}\max_{\|\lambda\|_2=1} & & \lambda^T\left ([A \v]\begin{bmatrix}\x_{sol}-\x\\\sigma\end{bmatrix} \right ) \nonumber \\
\mbox{subject to} & & \x\in \left [-\frac{1}{\sqrt{n}},\frac{1}{\sqrt{n}}\right ]^n\nonumber \\
& & (\x_{sol})^T\x=c_1\nonumber \\
& &  \|\x\|_2^2=c_2. \label{eq:clup7a}
\end{eqnarray}

%%%%%%%%%%%%%%%%%%%%%%%%%%%%%%%%%%%%%%%%%%%%%%%%%%%%%%%%%%%%%%%%%
\subsubsection{Random Duality Theory -- a simple exercise}
\label{sec:rdt1}
%%%%%%%%%%%%%%%%%%%%%%%%%%%%%%%%%%%%%%%%%%%%%%%%%%%%%%%%%%%%%%%%%

What we will present below is basically a simple exercise within RDT and many steps can be done substantially faster. However, as it can be done through the utilization of a host of the results that we have already created we will take a moment and do it in a systematic way described in \cite{StojnicCSetam09,StojnicCSetamBlock09,StojnicISIT2010binary,StojnicDiscPercp13,StojnicUpper10,StojnicGenLasso10,StojnicGenSocp10,StojnicPrDepSocp10,StojnicRegRndDlt10,Stojnicbinary16fin,Stojnicbinary16asym}.

\vspace{.1in}
\noindent \xmyboxc{\bl{\emph{\textbf{1. First step -- \dgr{Forming the deterministic Lagrange dual} }}}}

\vspace{.1in}
We start with the first step which is just simple forming of the standard deterministic Lagrange dual of the optimization problem in (\ref{eq:clup7}) (see, e.g. (\cite{StojnicCSetam09,StojnicISIT2010binary,StojnicDiscPercp13,StojnicGenLasso10,StojnicGenSocp10,StojnicPrDepSocp10,StojnicRegRndDlt10}))
\begin{eqnarray}
\max_{c_2}\min_{c_1}\min_{\x}\max_{\|\lambda\|_2=1,\gamma,\nu} & & \lambda^T\left ([A \v]\begin{bmatrix}\x_{sol}-\x\\\sigma\end{bmatrix} \right )+\nu((\x_{sol})^T\x-c_1)+\gamma (\|\x\|_2^2-c_2) \nonumber \\
\mbox{subject to} & & \x\in \left [-\frac{1}{\sqrt{n}},\frac{1}{\sqrt{n}}\right ]^n. \label{eq:clup8}
\end{eqnarray}
As we are interested in a statistical and large dimensional scenario $\nu$ and $\gamma$ will concentrate and as scalars can be discretized and the resulting optimization over these two quantities can be taken outside
\begin{eqnarray}
\max_{c_2}\min_{c_1}\max_{\gamma,\nu}\min_{\x}\max_{\|\lambda\|_2=1} & & \lambda^T\left ([A \v]\begin{bmatrix}\x_{sol}-\x\\\sigma\end{bmatrix} \right )+\nu((\x_{sol})^T\x-c_1)+\gamma (\|\x\|_2^2-c_2) \nonumber \\
\mbox{subject to} & & \x\in \left [-\frac{1}{\sqrt{n}},\frac{1}{\sqrt{n}}\right ]^n. \label{eq:clup9}
\end{eqnarray}

\vspace{.1in}
\noindent \xmyboxc{\bl{\emph{\textbf{2. Second step -- \dgr{Forming the Random dual} }}}}

\vspace{.1in}
In the second step we introduce the auxiliary program, the so-called \bl{\textbf{random dual}} to the above primal (see, e.g. (\cite{StojnicCSetam09,StojnicISIT2010binary,StojnicDiscPercp13,StojnicGenLasso10,StojnicGenSocp10,StojnicPrDepSocp10,StojnicRegRndDlt10})). Let $\bar{{\cal X}}=\left [-\frac{1}{\sqrt{n}},\frac{1}{\sqrt{n}}\right ]^n$. Then the random dual is the following problem
\begin{equation}
\max_{c_2}\min_{c_1}\max_{\gamma,\nu}\min_{\x\in \bar{{\cal X}}}\max_{\|\lambda\|_2=1} \lambda^T\g\sqrt{\|\x_{sol}-\x\|_2^2+\sigma^2}-\|\lambda\|_2(\h^T(\x_{sol}-\x)+h_0\sigma) +\nu((\x_{sol})^T\x-c_1)+\gamma (\|\x\|_2^2-c_2),\\ \label{eq:clup10}
\end{equation}
where the components of $\g$ and $\h$ are $m$ and $n$ dimensional vectors, respectively with i.i.d. standard normal components and $h_0$ is yet another standard normal independent of all other random variables. The minus sign in front of the second term is irrelevant due to rotational symmetry of $\h$ and it is introduced to have what follows as similar as possible to some of our earlier results. Similarly to (\ref{eq:clup7a}), let $\xi_{RD}(\alpha,\sigma;c_2,c_1,\gamma,\nu)$ be the following
\begin{equation}
\lim_{n\rightarrow\infty}\frac{1}{\sqrt{n}}\mE\min_{\x\in \bar{{\cal X}}}\max_{\|\lambda\|_2=1}  \lambda^T\g\sqrt{\|\x_{sol}-\x\|_2^2+\sigma^2}-\|\lambda\|_2(\h^T(\x_{sol}-\x)+h_0\sigma) +\nu((\x_{sol})^T\x-c_1)+\gamma (\|\x\|_2^2-c_2). \label{eq:clup10a}
\end{equation}

\vspace{.1in}
\noindent \xmyboxc{\bl{\emph{\textbf{3. Third step -- \dgr{Handling the Random dual} }}}}

\vspace{.1in}
In the third step we analyze the above \bl{\textbf{random dual}}. We follow again step by step the strategy outlined in \cite{StojnicCSetam09,StojnicISIT2010binary,StojnicDiscPercp13,StojnicGenLasso10,StojnicGenSocp10,StojnicPrDepSocp10,StojnicRegRndDlt10}. It effectively boils down to the Lagrangianization and the concentration of the introduced Lagrangian slack variables. We do mention though, that in the problem at hand the first step could have been skipped as we mentioned earlier; however we have done it for the completeness as it is generally needed. Instead, one could have applied the Lagrangianization right now to arrive at (\ref{eq:clup10}). Since we have already done it we then proceed with the remaining steps. Now, we first observe that the inner optimization over $\lambda$ is trivial and one gets
\begin{eqnarray}
\max_{c_2}\min_{c_1}\max_{\gamma,\nu}\min_{\x} & & \|\g\|_2\sqrt{1-2c_1+c_2+\sigma^2}-(\h^T(\x_{sol}-\x)+h_0\sigma) +\nu((\x_{sol})^T\x-c_1)+\gamma (\|\x\|_2^2-c_2) \nonumber \\
\mbox{subject to} & & \x\in \left [-\frac{1}{\sqrt{n}},\frac{1}{\sqrt{n}}\right ]^n. \label{eq:clup11}
\end{eqnarray}
One can then follow say \cite{StojnicDiscPercp13} and define
\begin{eqnarray}
f_{box}(\h;c_2,c_1)=\max_{\gamma,\nu}\min_{\x} & & \h^T\x +\nu((\x_{sol})^T\x-c_1)+\gamma (\|\x\|_2^2-c_2) \nonumber \\
\mbox{subject to} & & \x\in \left [-\frac{1}{\sqrt{n}},\frac{1}{\sqrt{n}}\right ]^n. \label{eq:clup12}
\end{eqnarray}
Had we not introduced $c_1$ constraint with a simple shift this would be literally identical to the box constrained problem considered in \cite{StojnicDiscPercp13} and we could immediately use the solution given there. However, as mentioned earlier, here we are choosing a bit more complicated route to emphasize the structure of some of the important quantities utilized in CLuP. Still, the optimization problem in (\ref{eq:clup12}) is very similar to the one in (109) in \cite{StojnicDiscPercp13}. The solution of (\ref{eq:clup12}) is consequently very similar to (110) in \cite{StojnicDiscPercp13} with a very small change to account for $c_1$ and $\nu$. Basically, instead of (110) from \cite{StojnicDiscPercp13} one now has
\begin{eqnarray}
f_{box}(\h;c_2,c_1)  =  \max_{\gamma,\nu} & & \frac{1}{\sqrt{n}}\left (\sum_{i=1}^{n}f_{box}^{(1)}(\h_i,\gamma,\nu)\right )-\nu c_1\sqrt{n}-\gamma c_2\sqrt{n},\label{eq:clup13}
\end{eqnarray}
where
\begin{equation}
f_{box}^{(1)}(\h_i,\gamma,\nu)=\begin{cases}-|\h_i+\nu|+\gamma, & \h_i\leq -2\gamma-\nu\\
-\frac{(\h_i+\nu)^2}{4\gamma}, & -2\gamma-\nu\leq \h_i\leq 2\gamma-\nu\\
-|\h_i+\nu|+\gamma, & \h_i\geq 2\gamma-\nu,
\end{cases}\label{eq:clup14}
\end{equation}
and $\gamma$ and $\nu$ are $\sqrt{n}$ scaled versions of $\gamma$ and $\nu$ from (\ref{eq:clup12}). Moreover, the optimizing $\x_i$ is
\begin{equation}
\x_i=\frac{1}{\sqrt{n}}\min \left (\max\left (-1,-\left (\frac{\h+\nu}{2\gamma}\right )\right ),1\right ).\label{eq:clup14a}
\end{equation}
After solving the integrals one has
\begin{equation}
\mE f_{box}^{(1)}(\h_i,\gamma,\nu)=I_{22}-I_{1}+I_{21},\label{eq:clup15}
\end{equation}
where
\begin{eqnarray}
I_{22} &  = & 0.5(\nu + \gamma)\erfc((\nu + 2\gamma)/\sqrt{2}) - exp(-0.5(\nu + 2\gamma)^2)/\sqrt{2\pi} \nonumber \\
I_{1}  &  = & (\sqrt{\pi/2}(\nu^2 + 1)\erf((2\gamma - \nu)/\sqrt{2}) + \sqrt{\pi/2}(\nu^2 + 1)\erf((2\gamma + \nu)/\sqrt{2}) + exp(-0.5(\nu + 2\gamma)^2) (\nu - 2\gamma)\nonumber  \\& &- exp(-0.5(\nu-2\gamma)^2)(\nu + 2\gamma))/(4\sqrt{2\pi}\gamma)  \nonumber \\
I_{21} &  = &  -0.5(\nu - \gamma)(\erf((\nu - 2\gamma)/\sqrt{2}) + 1) - exp(-0.5(\nu - 2\gamma)^2)/\sqrt{2\pi}.
\label{eq:clup16}
\end{eqnarray}
Finally a combination of (\ref{eq:clup10})-(\ref{eq:clup16}) gives
\begin{equation}
\xi_{RD}(\alpha,\sigma;c_2,c_1,\gamma,\nu)=\sqrt{\alpha}\sqrt{1-2c_1+c_2+\sigma^2}+I_{22}-I_{1}+I_{21}-\nu c_1-\gamma c_2. \label{eq:clup17}
\end{equation}
The following theorem summarizes what we presented above.
\begin{theorem}(CLuP -- RDT estimate)
Let $\xi_{p}(\alpha,\sigma;c_2,c_1)$ and $\xi_{RD}(\alpha,\sigma;c_2,c_1,\gamma,\nu)$ be as in (\ref{eq:clup7a}) and (\ref{eq:clup17}), respectively. Then \begin{equation}
\xi_{p}(\alpha,\sigma;c_2,c_1)\geq \max_{\gamma,\nu}\xi_{RD}(\alpha,\sigma;c_2,c_1,\gamma,\nu).\label{eq:thmcluprd1}
\end{equation}
Consequently,
\begin{equation}
\min_{c_1}\xi_{p}(\alpha,\sigma;c_2,c_1)\geq \min_{c_1}\max_{\gamma,\nu}\xi_{RD}(\alpha,\sigma;c_2,c_1,\gamma,\nu).
% \mbox{and}\quad \max_{c_2}\min_{c_1}\xi_{p}(\alpha,\sigma;c_2,c_1)\geq \max_{c_2}\min_{c_1}\xi_{RD}(\alpha,\sigma;c_2,c_1,\gamma,\nu).
\label{eq:thmcluprd2}
\end{equation}\label{thm:cluprd1}
\end{theorem}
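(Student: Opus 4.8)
The plan is to establish (\ref{eq:thmcluprd1}) as the standard ``easy direction'' of RDT: combine elementary Lagrangian weak duality (which legitimizes pulling the multipliers $\gamma,\nu$ out in front) with the Gordon Gaussian min-max comparison inequality (which replaces the bilinear Gaussian form by the decoupled random-dual form); inequality (\ref{eq:thmcluprd2}) is then immediate. Throughout I fix $c_2$ and $c_1$ and abbreviate $z(\x)=\begin{bmatrix}\x_{sol}-\x\\\sigma\end{bmatrix}$, $g_1(\x)=(\x_{sol})^T\x-c_1$, $g_2(\x)=\|\x\|_2^2-c_2$. Let $\Phi_n=\min_{\x\in\bar{{\cal X}},\,g_1=g_2=0}\max_{\|\lambda\|_2=1}\lambda^T[A \v]z(\x)$ be the finite-$n$ object defining $\xi_{p}$ in (\ref{eq:clup7a}), and let $D_n(\gamma,\nu)=\min_{\x\in\bar{{\cal X}}}\max_{\|\lambda\|_2=1}[\lambda^T[A \v]z(\x)+\nu g_1(\x)+\gamma g_2(\x)]$ be its penalized version.

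First I would carry out the weak-duality step. For any fixed $\gamma,\nu$ and any realization of $A,\v$, let $\x^\star$ be the hard-constrained minimizer for $\Phi_n$, so $g_1(\x^\star)=g_2(\x^\star)=0$. Since $\x^\star$ is feasible for the larger box $\bar{{\cal X}}$ and the penalty terms vanish at $\x^\star$, substituting it into $D_n$ gives $D_n(\gamma,\nu)\leq\max_{\|\lambda\|_2=1}\lambda^T[A \v]z(\x^\star)=\Phi_n$. Thus $D_n(\gamma,\nu)\leq\Phi_n$ pointwise, hence $\frac{1}{\sqrt{n}}\mE D_n(\gamma,\nu)\leq\frac{1}{\sqrt{n}}\mE\Phi_n$, and letting $n\to\infty$ yields $\xi_{p}(\alpha,\sigma;c_2,c_1)\geq\lim_n\frac{1}{\sqrt{n}}\mE D_n(\gamma,\nu)$ for every $\gamma,\nu$.

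The crux is the comparison step. With $\gamma,\nu$ fixed, the only dependence of $D_n$ on the Gaussian matrix $[A \v]$ enters through the bilinear term $\lambda^T[A \v]z(\x)$, whereas $\nu g_1+\gamma g_2$ is deterministic given $\x$. I would therefore identify $\u=z(\x)$ (which ranges over the image of the box under an affine map, hence over a compact set), $\lambda$ over the unit sphere, and the deterministic payoff $\psi(\x,\lambda)=\nu g_1(\x)+\gamma g_2(\x)$, and apply Gordon's min-max inequality. This replaces $\lambda^T[A \v]z$ by $\|z\|_2\lambda^T\g+\|\lambda\|_2(\h^T(\x_{sol}-\x)+h_0\sigma)$ and lowers the expectation, giving $\mE D_n(\gamma,\nu)\geq\mE\tilde D_n(\gamma,\nu)$, where $\tilde D_n$ is exactly the random dual of (\ref{eq:clup10}); the minus sign carried in (\ref{eq:clup10}) is immaterial by the rotational symmetry of $\h$, as already noted. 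Passing to the limit gives $\lim_n\frac{1}{\sqrt{n}}\mE\tilde D_n(\gamma,\nu)=\xi_{RD}(\alpha,\sigma;c_2,c_1,\gamma,\nu)$, whose closed form follows from the bookkeeping already displayed: the inner $\lambda$-maximization collapses the first term to $\|\g\|_2\sqrt{1-2c_1+c_2+\sigma^2}$ (using $\|\x_{sol}-\x\|_2^2=1-2c_1+c_2$), the remainder separates across the $n$ coordinates into $f_{box}^{(1)}$, and the Gaussian integrals (\ref{eq:clup15})--(\ref{eq:clup16}) deliver (\ref{eq:clup17}).

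Chaining the two steps gives $\xi_{p}(\alpha,\sigma;c_2,c_1)\geq\xi_{RD}(\alpha,\sigma;c_2,c_1,\gamma,\nu)$ for every $\gamma,\nu$, and taking the supremum over $\gamma,\nu$ proves (\ref{eq:thmcluprd1}). For (\ref{eq:thmcluprd2}) I would use that (\ref{eq:thmcluprd1}) holds pointwise in $c_1$: for each $c_1$, $\xi_{p}(\alpha,\sigma;c_2,c_1)\geq\max_{\gamma,\nu}\xi_{RD}(\alpha,\sigma;c_2,c_1,\gamma,\nu)\geq\min_{c_1}\max_{\gamma,\nu}\xi_{RD}(\alpha,\sigma;c_2,c_1,\gamma,\nu)$, so minimizing the left-hand side over $c_1$ gives the claim. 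The genuine content, and the main obstacle, is the Gordon comparison inequality together with verifying its hypotheses for the $\sigma$-augmented vector $z(\x)$ and the (effectively compact) feasible sets; the weak-duality and integral-evaluation steps are routine. I would also stress that only the one-sided Gordon bound is invoked, with no appeal to convexity or a matching (CGMT) upper bound, which is precisely why the conclusion is an inequality rather than an equality.
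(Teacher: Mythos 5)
Your proposal is correct and follows essentially the same route as the paper, which simply invokes "the above derivation and the general RDT concepts" of the cited references — i.e., exactly the Lagrangian weak-duality step (legitimizing the penalized form with multipliers $\gamma,\nu$) followed by the one-sided Gordon Gaussian min-max comparison that lower-bounds the primal by the random dual, with the closed form (\ref{eq:clup17}) coming from the coordinate-wise separation and the integrals (\ref{eq:clup15})--(\ref{eq:clup16}). Your write-up merely makes explicit what the paper leaves as a citation, including the correct observation that only the lower-bound direction of Gordon is needed, which is why the theorem is stated as an inequality pending strong random duality.
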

\begin{proof}
Follows from the above derivation and the general RDT concepts presented in \cite{StojnicCSetam09,StojnicISIT2010binary,StojnicDiscPercp13,StojnicGenLasso10,StojnicGenSocp10,StojnicPrDepSocp10,StojnicRegRndDlt10}.
\end{proof}
Moreover, the inequalities in the above theorem are replaced with equalities when the \bl{\textbf{strong random duality}} holds. As shown in \cite{StojnicDiscPercp13,StojnicGenLasso10,StojnicGenSocp10,StojnicPrDepSocp10,StojnicRegRndDlt10} this certainly happens when the strong deterministic duality holds.

%%%%%%%%%%%%%%%%%%%%%%%%%%%%%%%%%%%%%%%%%%%%%%%%%%%%%%%%%%%%%%%%%
\subsubsection{CLuP's performance as a function of $r$}
\label{sec:clupfunr}
%%%%%%%%%%%%%%%%%%%%%%%%%%%%%%%%%%%%%%%%%%%%%%%%%%%%%%%%%%%%%%%%%

The above analysis can be utilized to do both, 1) complete the design of the CLuP and 2) characterize its performance. To complete the design of CLuP one needs to adequately choose the radius $r$. That is in general very hard task and depends on the system parameters $\alpha$ and $\sigma$ at the very least. Moreover, the dependence can be very complicated. In this introductory paper, we will try to keep things as simple and elegant as possible and will discuss only the simplest possible choices.

First, we have a firm lower bound on $r$. It is given through the following optimization
\begin{eqnarray}
r_{plt}\triangleq \lim_{n\rightarrow\infty}\frac{1}{\sqrt{n}}\mE \quad \min_{\x} & & \|\y-A\x\|_2  \nonumber \\
\mbox{subject to} & & \x\in \left [-\frac{1}{\sqrt{n}},\frac{1}{\sqrt{n}}\right ]^n. \label{eq:clup18}
\end{eqnarray}
This is of course nothing but the simple polytope relaxation of the original ML problem from (\ref{eq:ml1}). To make results easily presentable we will define
\begin{eqnarray}
r\triangleq r_{sc}r_{plt},\label{eq:clup18a}
\end{eqnarray}
where $r_{sc}$ will be the so-called scaling radius or the multiple of the minimal possible one. As mentioned earlier, the CLuP's performance can be estimated through the above mechanism relying on
\begin{eqnarray}
\max_{c_2\in[0,1]}\min_{c_1\in[0,(1+c_2)/2]}\max_{\gamma,\nu}\quad \xi_{RD}(\alpha,\sigma;c_2,c_1,\gamma,\nu)\leq r_{sc}r_{plt}. \label{eq:clup18b}
\end{eqnarray}
Moreover, when underlying functions behave nicely, one can further follow \cite{StojnicDiscPercp13,StojnicGenLasso10,StojnicGenSocp10,StojnicPrDepSocp10,StojnicRegRndDlt10} and estimate  various other performance features. For example, let $\hat{\nu}^{(CLuP)}$ be the optimal $\nu$ in (\ref{eq:clup18b}), then the probability of error $\hat{p}_{err}^{(clup)}$ is easy to obtain based on (\ref{eq:clup14a})
\begin{equation}
  \hat{p}_{err}^{(clup)}=P(\x_i\geq 0)=1-\frac{1}{2}\erfc\left ( \frac{\hat{\nu}^{(CLuP)}}{\sqrt{2}}\right ). \label{eq:clup18c}
\end{equation}
We should also add, that it is then relatively easy to see that the polytope relaxation is a trivial special case of the above formalism since for $r_{sc}=1$ one has $r=r_{plt}$ and
\begin{eqnarray}
r_{plt}= \min_{c_2\in[0,1]}\min_{c_1\in[0,(1+c_2)/2]}\xi_{p}(\alpha,\sigma;c_2,c_1). \label{eq:clup19}
\end{eqnarray}
Moreover, since (\ref{eq:clup18}) is a convex optimization problem one trivially has that the strong deterministic duality is in place which then according to \cite{StojnicDiscPercp13,StojnicGenLasso10,StojnicGenSocp10,StojnicPrDepSocp10,StojnicRegRndDlt10} implies that the strong random duality holds and consequently one has the exact equalities in (\ref{eq:thmcluprd1}) and (\ref{eq:thmcluprd2}). This then implies that
\begin{eqnarray}
r_{plt}= \min_{c_2\in[0,1]}\min_{c_1\in[0,(1+c_2)/2]}\max_{\gamma,\nu}\quad \xi_{RD}(\alpha,\sigma;c_2,c_1,\gamma,\nu). \label{eq:clup20}
\end{eqnarray}
and analogously to (\ref{eq:clup18c})
\begin{equation}
  p_{err}^{(plt)}=P(\x_i\geq 0)=1-\frac{1}{2}\erfc\left ( \frac{\hat{\nu}_{plt}}{\sqrt{2}}\right ), \label{eq:clup21}
\end{equation}
where $\hat{\nu}_{plt}$ is the optimal $\nu$ in (\ref{eq:clup20}). Of course, if one is solely interested in $r_{plt}$ and $p_{err}^{(plt)}$ they can be obtained trivially combining \cite{StojnicDiscPercp13,StojnicGenLasso10,StojnicGenSocp10,StojnicPrDepSocp10,StojnicRegRndDlt10} and in particular as an immediate consequence of the results in \cite{StojnicDiscPercp13}, most notably its equations (109) and (110).
\begin{figure}[htb]
%\begin{minipage}[b]{.5\linewidth}
\centering
\centerline{\epsfig{figure=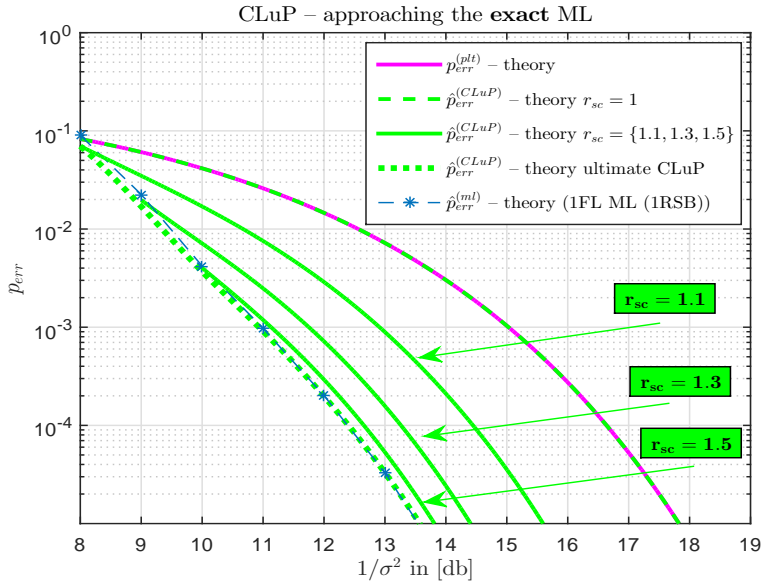,width=11.5cm,height=8cm}}
%\end{minipage}
%\begin{minipage}[b]{.5\linewidth}
%\centering
%\centerline{\epsfig{figure=finprerral08.eps,width=9cm,height=6.5cm}}
%\end{minipage}
\caption{$p_{err}$ as a function of $1/\sigma^2$; $\alpha=0.8$}
\label{fig:figclup1}
\end{figure}

In Figure \ref{fig:figclup1}, we show an introductory set of results that can be obtained through the above machinery. For simplicity we focus on the probability of error $p_{err}$. We of course attack the hard regimes where traditional techniques are typically hopeless in getting anywhere close to ML. That in first place means the cases where $\alpha<1$. The plots in Figure \ref{fig:figclup1} are obtained for moderately small $\alpha=0.8$. One first observes that the curves are moving from the polytope one to the ML one as $r_{sc}$ grows. This is of course the key point. However, things are not as simple. For example, just the ML prediction itself is a notoriously hard thing to obtain. Also, at a second glance one sees that for different $r_{sc}$ the curves seem to exhibit so to say a finite domain on the left side. Moreover, the dotted green line, which will be discussed later on, appears as well and stands for the so-called ultimate level of CLuP's calculated performance. This and many other phenomena that are actually hidden behind these plots may not be easy to understand right now. In the next section we give some hints as to what is happening. However, given that this is the introductory paper on this subject we want to keep things as simple as possible and will leave more complete discussions for some of our companion papers.

%%%%%%%%%%%%%%%%%%%%%%%%%%%%%%%%%%%%%%%%%%%%%%%%%%%%%%%%%%%%%%%%%
\section{Discussion}
\label{sec:discussion}
%%%%%%%%%%%%%%%%%%%%%%%%%%%%%%%%%%%%%%%%%%%%%%%%%%%%%%%%%%%%%%%%%

We start the discussion by first noting that the upper-bound on $r$ is not as trivial as the lower bound. In fact, it seems to be strongly related to the ML curve. To fully understand this it seems that one would have to have a pretty solid understanding of the ML curve itself. This is of course one of the most challenging problems at the intersection of the signal processing and information theory. Below we start things off by first sketching what kind of estimates one obtains regarding the ML curve directly from the RTD.

%%%%%%%%%%%%%%%%%%%%%%%%%%%%%%%%%%%%%%%%%%%%%%%%%%%%%%%%%%%%%%%%%
\subsection{ML -- RDT estimates}
\label{sec:discussionml}
%%%%%%%%%%%%%%%%%%%%%%%%%%%%%%%%%%%%%%%%%%%%%%%%%%%%%%%%%%%%%%%%%

We first recall that
\begin{eqnarray}\label{eq:ml2}
\hat{\x}=\mbox{arg}\min_{\x\in{\cal X}}\|\y-A\x\|_2.
\end{eqnarray}
Merging the three RDT steps we have the following as a simple exercise.

\vspace{.1in}
\noindent \xmyboxc{\bl{\emph{\textbf{ML RDT -- three steps merged -- \dgr{Forming and handling deterministic and random duals} }}}}

\vspace{.1in}
Since now $\x\in{\cal X}$ we have $c_2=1$ and analogously to (\ref{eq:clup8}) we have as the primal version of (\ref{eq:ml2})
\begin{eqnarray}
\min_{c_1}\min_{\x}\max_{\|\lambda\|_2=1,\nu} & & \lambda^T\left ([A \v]\begin{bmatrix}\x_{sol}-\x\\\sigma\end{bmatrix} \right )+\nu((\x_{sol})^T\x-c_1) \nonumber \\
\mbox{subject to} & & \x\in {\cal X}. \label{eq:disc1}
\end{eqnarray}
Analogously to (\ref{eq:clup7a}) we will also define
\begin{eqnarray}
\xi_{p}^{(ml)}(\alpha,\sigma;c_1)\triangleq \lim_{n\rightarrow\infty}\frac{1}{\sqrt{n}}\mE \min_{\x}\max_{\|\lambda\|_2=1,\nu} & & \lambda^T\left ([A \v]\begin{bmatrix}\x_{sol}-\x\\\sigma\end{bmatrix} \right )+\nu((\x_{sol})^T\x-c_1) \nonumber \\
\mbox{subject to} & & \x\in {\cal X}. \label{eq:disc1a}
\end{eqnarray}
Following further what was done earlier we have analogously to (\ref{eq:clup10a})
\begin{equation}
\xi_{RD}^{(ml)}(\alpha,\sigma;c_1,\nu)=\lim_{n\rightarrow\infty}\frac{1}{\sqrt{n}}\mE\min_{\x\in {\cal X}}\max_{\|\lambda\|_2=1}  \lambda^T\g\sqrt{\|\x_{sol}-\x\|_2^2+\sigma^2}-\|\lambda\|_2(\h^T(\x_{sol}-\x)+h_0\sigma) +\nu((\x_{sol})^T\x-c_1). \label{eq:disc2}
\end{equation}
Optimizing over $\lambda$ and $\x$ we further have
\begin{equation}
\xi_{RD}^{(ml)}(\alpha,\sigma;c_1,\nu)=\sqrt{\alpha}\sqrt{2-2c_1+\sigma^2}-\mE|\h_i +\nu|-\nu c_1, \label{eq:disc3}
\end{equation}
where $\nu$ is $\sqrt{n}$ scaled version of $\nu$ from (\ref{eq:disc2}). Moreover, one has for the optimizing $\x_i$
\begin{equation}
\x_i=-\mbox{sign}(\h_i+\nu). \label{eq:disc3a}
\end{equation}
After solving the integral one obtains
\begin{equation}
\xi_{RD}^{(ml)}(\alpha,\sigma;c_1,\nu)=\sqrt{\alpha}\sqrt{2-2c_1+\sigma^2}-\nu c_1-(\nu\erf(\nu/\sqrt{2})-\sqrt{2/\pi}exp(-\nu^2/2)). \label{eq:disc4}
\end{equation}
Taking the derivative over $\nu$ gives
\begin{equation*}
\frac{d\xi_{RD}^{(ml)}(\alpha,\sigma;c_1,\nu)}{d\nu}= -c_1-\erf(\nu/\sqrt{2})=0, \label{eq:disc5}
\end{equation*}
and finally
\begin{equation*}
\hat{\nu}=\sqrt{2}\erfinv(-c_1). \label{eq:disc6}
\end{equation*}
Plugging this back in (\ref{eq:disc4}) we have
\begin{equation}
\xi_{RD}^{(ml)}(\alpha,\sigma;c_1)=\sqrt{\alpha}\sqrt{2-2c_1+\sigma^2}+\sqrt{2/\pi}exp(-(\sqrt{2}\erfinv(-c_1))^2/2)). \label{eq:disc7}
\end{equation}
The following theorem is in a way an ML analogue to Theorem \ref{thm:cluprd1}.
\begin{theorem}(ML -- RDT estimate)
Let $\xi_{p}^{(ml)}(\alpha,\sigma;c_1)$ and $\xi_{RD}^{(ml)}(\alpha,\sigma;c_1,\nu)$ be as in (\ref{eq:disc1a}) and (\ref{eq:disc2}) (or (\ref{eq:disc7})), respectively. Then \begin{equation}
\xi_{p}^{(ml)}(\alpha,\sigma;c_1)\geq \max_{\nu}\xi_{RD}^{(ml)}(\alpha,\sigma;c_1,\nu)=\sqrt{\alpha}\sqrt{2-2c_1+\sigma^2}+\sqrt{2/\pi}exp(-(\sqrt{2}\erfinv(-c_1))^2/2)).\label{eq:thmdiscrd1}
\end{equation}
Consequently,
\begin{equation}
\min_{c_1}\xi_{p}^{(ml)}(\alpha,\sigma;c_1)\geq \min_{c_1}\max_{\nu}\xi_{RD}^{(ml)}(\alpha,\sigma;c_1,\nu)=\min_{c_1}\sqrt{\alpha}\sqrt{2-2c_1+\sigma^2}+\sqrt{2/\pi}exp(-(\sqrt{2}\erfinv(-c_1))^2/2)).
% \mbox{and}\quad \max_{c_2}\min_{c_1}\xi_{p}(\alpha,\sigma;c_2,c_1)\geq \max_{c_2}\min_{c_1}\xi_{RD}(\alpha,\sigma;c_2,c_1,\gamma,\nu).
\label{eq:thmdiscrd2}
\end{equation}\label{thm:discrd1}
\end{theorem}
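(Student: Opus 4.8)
The plan is to mirror, step for step, the three-part RDT derivation already used for Theorem~\ref{thm:cluprd1}, now specialized to the discrete set ${\cal X}$ so that $\|\x\|_2^2=1$ is automatic and only the single multiplier $\nu$ (for the constraint $(\x_{sol})^T\x=c_1$) survives. The sole probabilistic ingredient is the passage from the deterministic Lagrange dual (\ref{eq:disc1}) to the random dual (\ref{eq:disc2}); everything afterwards is deterministic calculus. For the inequality, write $G=[A\ \v]$ and $\w=\begin{bmatrix}\x_{sol}-\x\\\sigma\end{bmatrix}$, so that the objective in (\ref{eq:disc1a}) carries the bilinear Gaussian form $\lambda^T G\w$ with $\w$ affine in the minimization variable $\x$ and $\lambda$ on the unit sphere. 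This is exactly the configuration to which the Gaussian comparison (Gordon) inequality underlying RDT applies: with the minimization over $\x$ outside and the maximization over $(\lambda,\nu)$ inside, Gordon's lemma gives $\mE\,\Phi(G)\geq\mE\,\phi(\g,\h)$, where $\Phi$ is the objective of (\ref{eq:disc1a}) and $\phi$ is obtained by replacing $\lambda^T G\w$ with $\|\w\|_2\,\g^T\lambda-\|\lambda\|_2(\h^T(\x_{sol}-\x)+h_0\sigma)$, i.e.\ the objective of (\ref{eq:disc2}). Because only the lower bound is claimed, the basic Gordon comparison suffices and no convexity of ${\cal X}$ is required, which is precisely what lets the argument go through on the discrete cube.

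To turn this into the stated bound I would then invoke the concentration of the scalar multiplier $\nu$ and of $\|\g\|_2$ (as assembled in \cite{StojnicCSetam09,StojnicDiscPercp13,StojnicGenLasso10}), which lets $\max_\nu$ be taken outside the expectation and the $n\to\infty$ limit and gives $\frac{1}{\sqrt{n}}\|\g\|_2\to\sqrt\alpha$. The random dual is then evaluated explicitly: the inner maximization over $\lambda$ on the unit sphere is immediate and produces $\|\g\|_2\sqrt{\|\x_{sol}-\x\|_2^2+\sigma^2}-(\h^T(\x_{sol}-\x)+h_0\sigma)$; imposing $(\x_{sol})^T\x=c_1$ makes $\|\x_{sol}-\x\|_2^2=2-2c_1$ constant and yields the term $\sqrt\alpha\sqrt{2-2c_1+\sigma^2}$, while the zero-mean terms $\h^T\x_{sol}$ and $h_0\sigma$ drop out in expectation. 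The remaining minimization over $\x\in{\cal X}$ decouples coordinatewise, and since every entry of $\x_{sol}$ equals $1/\sqrt{n}$ the minimizer is $\x_i=-\mbox{sign}(\h_i+\nu)$ as in (\ref{eq:disc3a}), each coordinate contributing $-\mE|\h_i+\nu|$. Evaluating this half-normal integral and substituting gives $\xi_{RD}^{(ml)}(\alpha,\sigma;c_1,\nu)$ in the form (\ref{eq:disc4}); the map $\nu\mapsto\mE|\h_i+\nu|$ is convex, so the objective is concave in $\nu$ and the stationarity condition $-c_1-\erf(\nu/\sqrt{2})=0$ identifies the maximizer $\hat\nu=\sqrt{2}\,\erfinv(-c_1)$. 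Back-substitution collapses the $\nu$-dependent terms and leaves the closed form (\ref{eq:disc7}), which is the claimed equality.

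The ``consequently'' statement (\ref{eq:thmdiscrd2}) then follows at once by taking $\min_{c_1}$ of both sides of the pointwise inequality, an operation that the monotonicity of $\min$ preserves. I expect the main obstacle to lie not in this calculus but in the first step: a clean justification of the Gordon comparison in this exact min--max--Lagrangian format, and in particular the interchange that moves $\max_\nu$ outside the expectation and the limit. Since $\nu$ ranges over an unbounded set whereas Gordon is stated for compact index sets, this requires the concentration (and effective boundedness/discretization) of $\nu$, together with the concentration of $\|\g\|_2$ and a uniform-in-$\x$ control ensuring that the coordinatewise reduction commutes with the expectation. These are exactly the facts packaged in \cite{StojnicCSetam09,StojnicISIT2010binary,StojnicDiscPercp13,StojnicGenLasso10,StojnicGenSocp10,StojnicPrDepSocp10,StojnicRegRndDlt10}, so I would invoke them as in the proof of Theorem~\ref{thm:cluprd1}; the integral evaluation and the $\nu$-optimization are then routine.
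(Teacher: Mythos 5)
Your proposal is correct and follows essentially the same route as the paper: the paper's proof is precisely the derivation in Section \ref{sec:discussionml} (deterministic dual, random dual via the Gaussian comparison underlying RDT, optimization over $\lambda$ and coordinatewise over $\x\in{\cal X}$ giving $\x_i=-\mbox{sign}(\h_i+\nu)$, evaluation of $\mE|\h_i+\nu|$, and the stationarity condition $\hat\nu=\sqrt{2}\,\erfinv(-c_1)$), with the probabilistic interchanges delegated to the cited RDT references exactly as you do. Your explicit identification of Gordon's inequality and of the $\nu$-interchange as the only nontrivial probabilistic steps is a faithful unpacking of what the paper compresses into ``follows from the above derivation and the general RDT concepts.''
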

\begin{proof}
Follows from the above derivation and the general RDT concepts presented in \cite{StojnicCSetam09,StojnicISIT2010binary,StojnicDiscPercp13,StojnicGenLasso10,StojnicGenSocp10,StojnicPrDepSocp10,StojnicRegRndDlt10}.
\end{proof}
One also easily has the following estimate for the probability of error
\begin{equation}
  p_{err}^{(ml)}=(1-\hat{c}_{1})/2,\label{eq:disc8}
\end{equation}
where $\hat{c}_{1}$ is the optimal $c_1$ in (\ref{eq:thmdiscrd2}). In Figure \ref{fig:figdisc1}, we show a set of results that can be obtained through the above theorem.
\begin{figure}[htb]
%\begin{minipage}[b]{.5\linewidth}
\centering
\centerline{\epsfig{figure=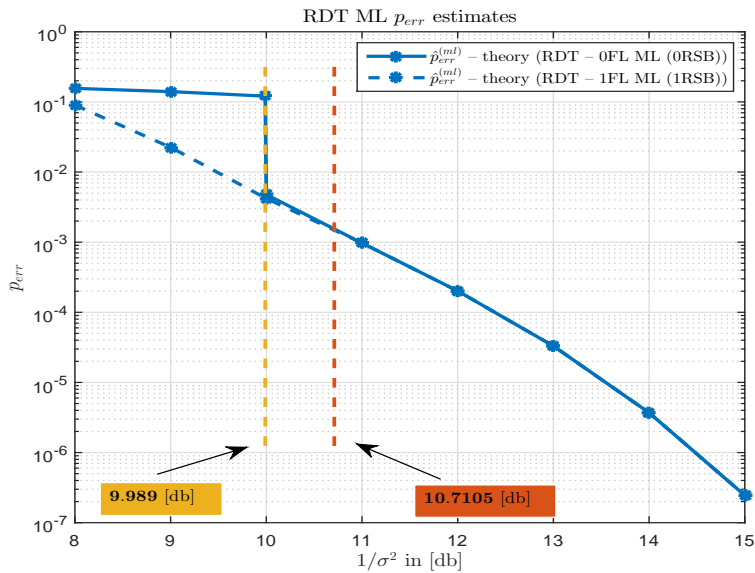,width=11.5cm,height=8cm}}
%\end{minipage}
%\begin{minipage}[b]{.5\linewidth}
%\centering
%\centerline{\epsfig{figure=finprerral08.eps,width=9cm,height=6.5cm}}
%\end{minipage}
\caption{$p_{err}^{(ml)}$ as a function of $1/\sigma^2$; $\alpha=0.8$ (RDT - 0FL and 1FL (0RSB and 1RSB))}
\label{fig:figdisc1}
\end{figure}
We again focus on the probability of error $p_{err}$ and attack the same $\alpha=0.8$ regime. The full blue curve is obtained based on the above machinery. One immediately observes that the curve has a very strong and clearly visible discontinuity happening around $9.989 [db]$. This of course signals that certain corrections might be needed to the estimates that one obtains using the above theorem. We introduce these corrections through the so-called 1FL RDT (first level of full lifted random duality) and plot them as a dashed blue curve. These results are obtained through a general lifting random duality formalism that we will discuss in a separate paper. As the final results are very involved we here only draw the plot to indicate that the glitch in the original curve may indeed have to be corrected. We also mention in passing that in the companion paper we will also design a particular way of the statistical physics replica theory. It will turn out that its 1RSB version will fully match the 1FL RDT. Needless to say that the 0RSB will match the RDT prediction given above, to which we will sometimes also refer as a 0FL RDT (zeroth level of full lifted random duality, or basically just the random duality itself).

Another interesting thing is the appearance of a second vertical line around $10.71 [db]$. While it seems obvious that the corrections might be needed for $1/\sigma^2\leq 9.989[db]$ there is of course no guarantee that they may not be needed (say on a smaller scale) for the values of $1/\sigma^2$ above $9.989[db]$. The line at $10.7105 [db]$ may in fact be the critical value of $1/\sigma^2$ for which mild corrections are needed. Namely, analyzing the function $\xi_{RD}^{(ml)}(\alpha,\sigma;c_1)$ given in (\ref{eq:disc7}) one finds that it starts having multiple local minima at $1/\sigma^2=10.7105[db]$. In fact, in Figure \ref{fig:figdisc2}, we show the behavior of $\xi_{RD}^{(ml)}(\alpha,\sigma;c_1)$ at $1/\sigma^2=10.7105[db]$.
\begin{figure}[htb]
%\begin{minipage}[b]{.5\linewidth}
\centering
\centerline{\epsfig{figure=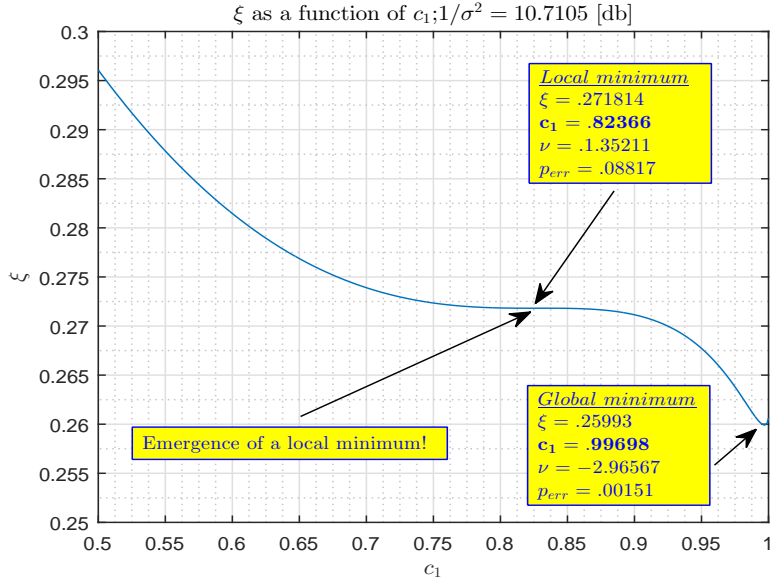,width=11.5cm,height=8cm}}
%\end{minipage}
%\begin{minipage}[b]{.5\linewidth}
%\centering
%\centerline{\epsfig{figure=finprerral08.eps,width=9cm,height=6.5cm}}
%\end{minipage}
\caption{$\xi_{RD}^{(ml)}$ as a function of $c_1$; $\alpha=0.8$; $1/\sigma^2= 10.7105[db]$ (RDT - 0FL (0RSB))}
\label{fig:figdisc2}
\end{figure}
As can be seen, in addition to the global minimum at $c_1=0.99698$, one now has an emerging local minimum at $c_1=0.82366$. Moreover, the probability of error corresponding to $c_1=0.99698$ is $p_{err}^{(ml)}=0.00151$ whereas the one corresponding to $c_1=0.82366$ is $p_{err}^{(ml)}=0.08817$. This is of course a very substantial difference in performance behavior and it is directly connected to the glitch that happens at $1/\sigma^2= 9.989[db]$. Namely, as $1/\sigma^2$ moves further below $10.7105 [db]$ this local minimum becomes more and more pronounced. As Figure \ref{fig:figdisc3} indicates, it finally overtakes as the global minimum and one indeed has a very strong discontinuity.
\begin{figure}[htb]
%\begin{minipage}[b]{.5\linewidth}
\centering
\centerline{\epsfig{figure=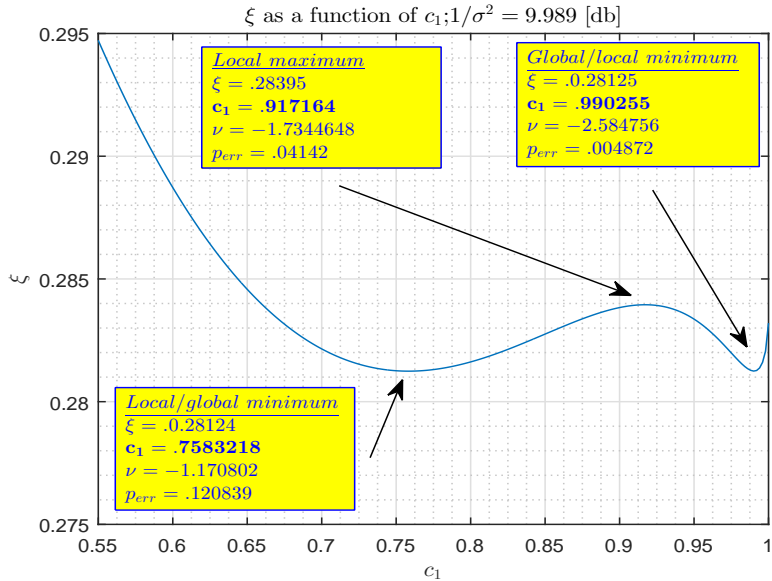,width=11.5cm,height=8cm}}
%\end{minipage}
%\begin{minipage}[b]{.5\linewidth}
%\centering
%\centerline{\epsfig{figure=finprerral08.eps,width=9cm,height=6.5cm}}
%\end{minipage}
\caption{$\xi_{RD}^{(ml)}$ as a function of $c_1$; $\alpha=0.8$; $1/\sigma^2= 9.989[db]$ (RDT - 0FL (0RSB))}
\label{fig:figdisc3}
\end{figure}
While we leave the details of the 1FL RDT for the companion papers we do mention here that it substantially smoothens the glitch. Still, we do believe that higher levels of lifting actually achieve the exact performance (in fact, the second level is probably already getting close enough that visually distinguishing further improvements would be virtually impossible). However, from the practical viewpoint (and as we will see later on when we discuss the numerical results) the corrections at 1FL RDT are already very close to the simulated values. In fact, for $1/\sigma^2=10[db]$ the correction does exist but it is fairly small (virtually invisible in Figure \ref{fig:figdisc1}). On the other hand already for $1/\sigma^2=11[db]$ we were not able to find any noticeable corrections. This may indicate that the line of mild or no corrections might indeed be somewhere between $10-11[db]$ (as mentioned above, quite possibly maybe not even far away from $10.7105[db]$). From this small discussion one can already see that the whole story is way more complicated compared to how it may initially seem from the nice plots. This type of discussion is basically provided just as a hint as to what kind of miracles might be happening and how they may be related to CLuP which is the main interest of this paper. We of course leave more thorough discussions regarding the ML performance for one of our companion papers.

%%%%%%%%%%%%%%%%%%%%%%%%%%%%%%%%%%%%%%%%%%%%%%%%%%%%%%%%%%%%%%%%%
\subsection{CLuP -- how it relates to ML}
\label{sec:discussionclup}
%%%%%%%%%%%%%%%%%%%%%%%%%%%%%%%%%%%%%%%%%%%%%%%%%%%%%%%%%%%%%%%%%

Now that we did get a bit of a feeling as to what happens with ML performance we will get back to the CLuP itself. We recall that the plots in Figure \ref{fig:figclup1} seem to exhibit a finite domain on the left side, meaning that below certain values of SNR $1/\sigma^2$, ceratin scalings of $r_{plt}$ might not be possible. We also recall the existence of a dashed green curve in Figure \ref{fig:figclup1}. These things are to a large degree connected to the ML performance and we will discuss them in a bit more detail  below. However, before doing so, we also observe several properties of CLuP $\xi_{RD}$ function that in a way may also be connected to the above discussed ML performance.

%%%%%%%%%%%%%%%%%%%%%%%%%%%%%%%%%%%%%%%%%%%%%%%%%%%%%%%%%%%%%%%%%
\subsubsection{CLuP -- $\xi_{RD}$ local optima}
\label{sec:discussioncluplocal}
%%%%%%%%%%%%%%%%%%%%%%%%%%%%%%%%%%%%%%%%%%%%%%%%%%%%%%%%%%%%%%%%%

We will focus on the SNR regime where the above discussion indicates that the ML corrections might be needed. So, we first start with $1/\sigma^2=11[db]$ (this is actually slightly above the above discussed $10.7105[db]$ line but it is a good starting point). We select a particular value $c_2=0.9979$ (this choice will become clear later on) and show in Figure \ref{fig:figdisc4} how the $\xi_{RD}$ changes as a function of $c_1$.
\begin{figure}[htb]
%\begin{minipage}[b]{.5\linewidth}
\centering
\centerline{\epsfig{figure=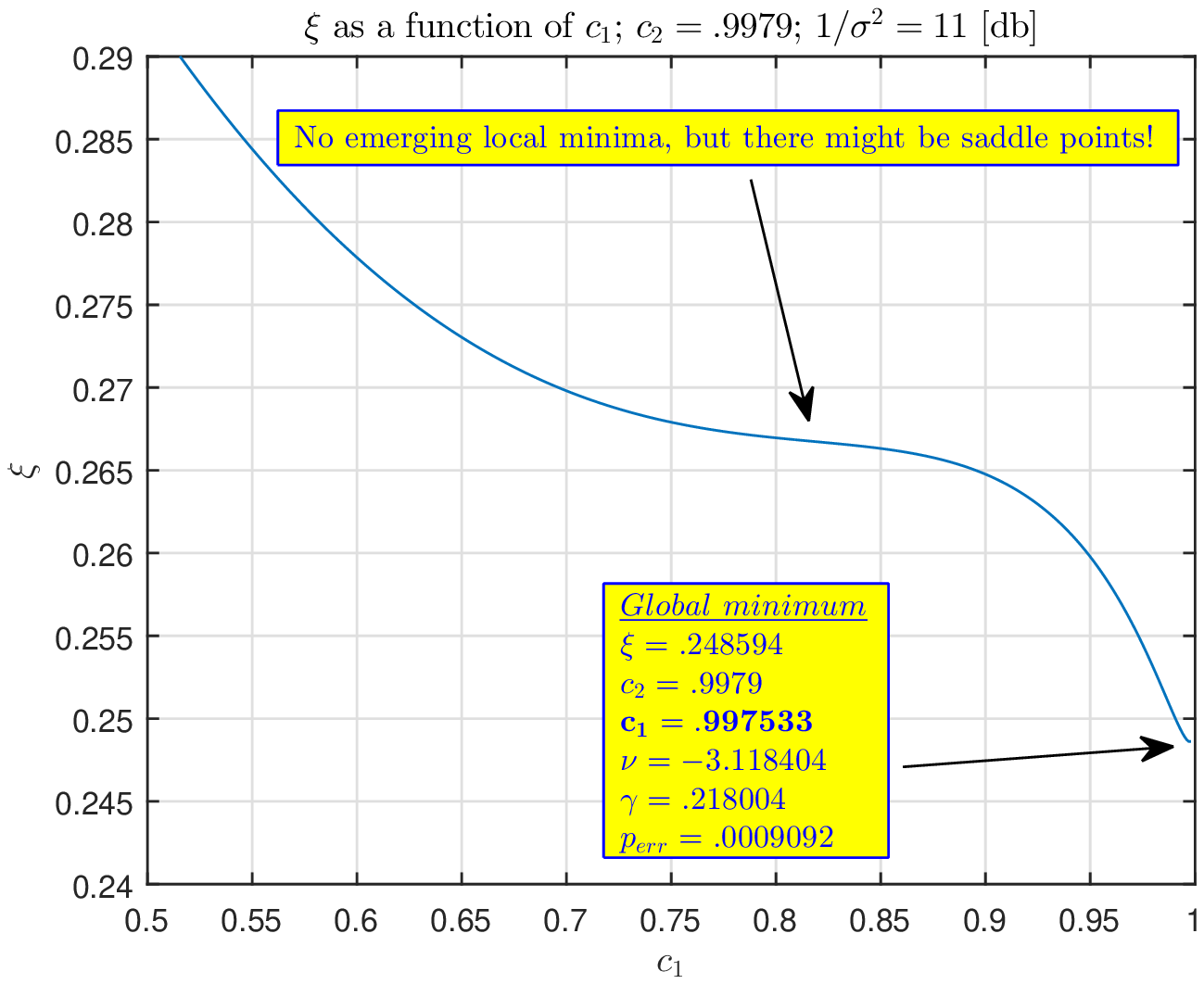,width=11.5cm,height=8cm}}
%\end{minipage}
%\begin{minipage}[b]{.5\linewidth}
%\centering
%\centerline{\epsfig{figure=finprerral08.eps,width=9cm,height=6.5cm}}
%\end{minipage}
\caption{$\xi_{RD}$ as a function of $c_1$; $\alpha=0.8$; $1/\sigma^2=11[db]$; $c_2=0.9979$ (RDT - 0FL (0RSB))}
\label{fig:figdisc4}
\end{figure}
As it turns out there is no an emerging local optimum (based on the shape of the curve, one might hypothetically assume that there might be some saddle points; however, given how complicated the underlining functions are this may seem rather unlikely). This is of course only a particular choice of $c_2$ which will correspond to a particular choice of $r_{sc}$ and consequently $r$. However, we found no $c_2$ where a local optimum over $c_1$ emerges. One can now note that this is in a nice agreement with the above ML discussion.

On the other hand, things are a little different as one moves the SNR down to $10[db]$.
\begin{figure}[htb]
%\begin{minipage}[b]{.5\linewidth}
\centering
\centerline{\epsfig{figure=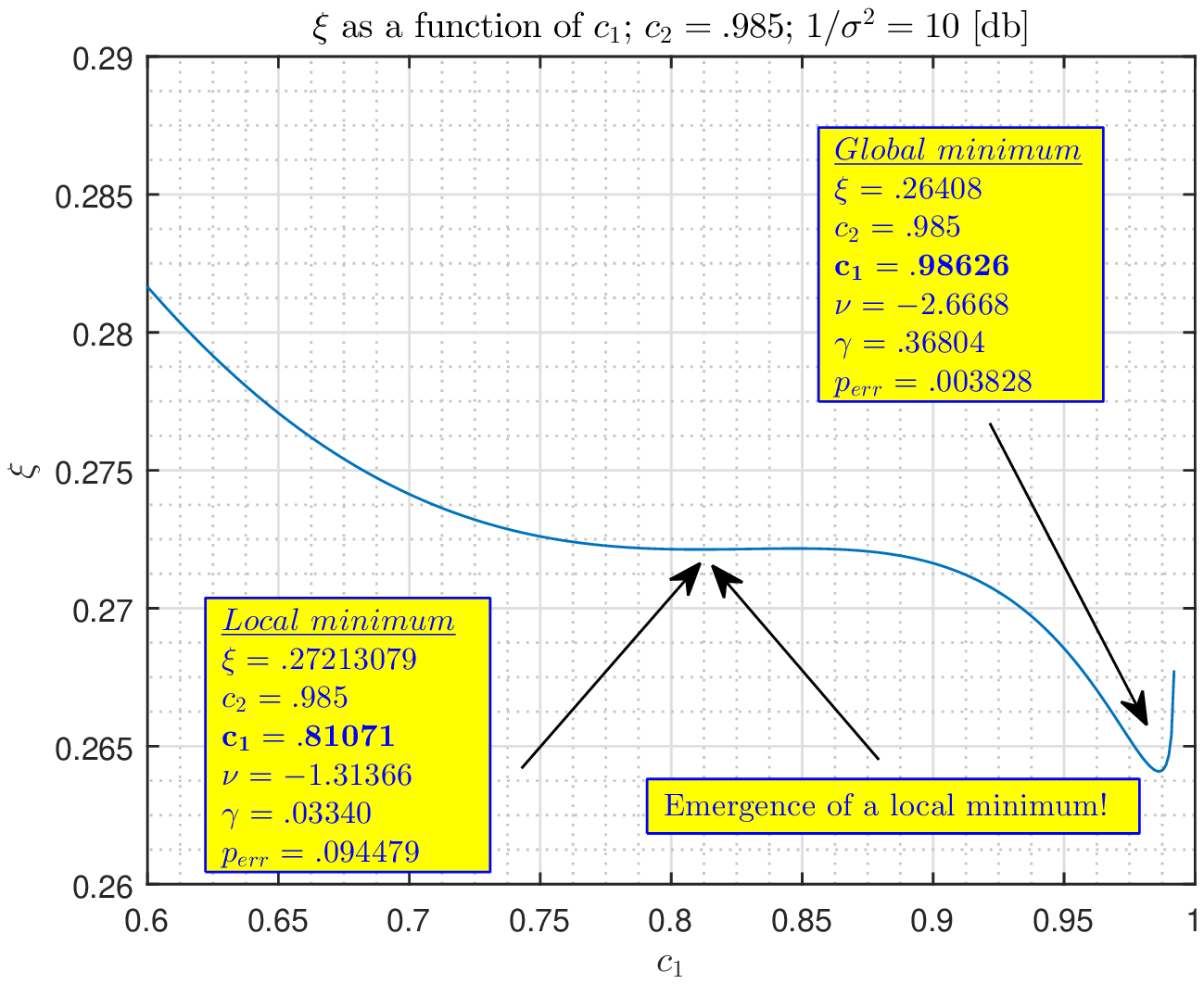,width=11.5cm,height=8cm}}
%\end{minipage}
%\begin{minipage}[b]{.5\linewidth}
%\centering
%\centerline{\epsfig{figure=finprerral08.eps,width=9cm,height=6.5cm}}
%\end{minipage}
\caption{$\xi_{RD}$ as a function of $c_1$; $\alpha=0.8$; $1/\sigma^2=10[db]$; $c_2=0.985$ (RDT - 0FL (0RSB))}
\label{fig:figdisc5}
\end{figure}
In Figure \ref{fig:figdisc5} we show how $\xi_{RD}$ changes as a function of $c_1$ for $c_2=0.985$ and observe the emergence of a local minimum. Based on the optimizing values one again notes a very sharp difference in the estimated probabilities of error. However, we found no values for $c_2$ where the emerging local optimum overtakes and becomes the global minimum. This might indicate that since it is an iterative algorithm, CLuP may have problems getting to the global optimum but with a careful strategy might be able to avoid local traps as well.

As one moves the SNR further down to $9[db]$ things become even more different.
\begin{figure}[htb]
%\begin{minipage}[b]{.5\linewidth}
\centering
\centerline{\epsfig{figure=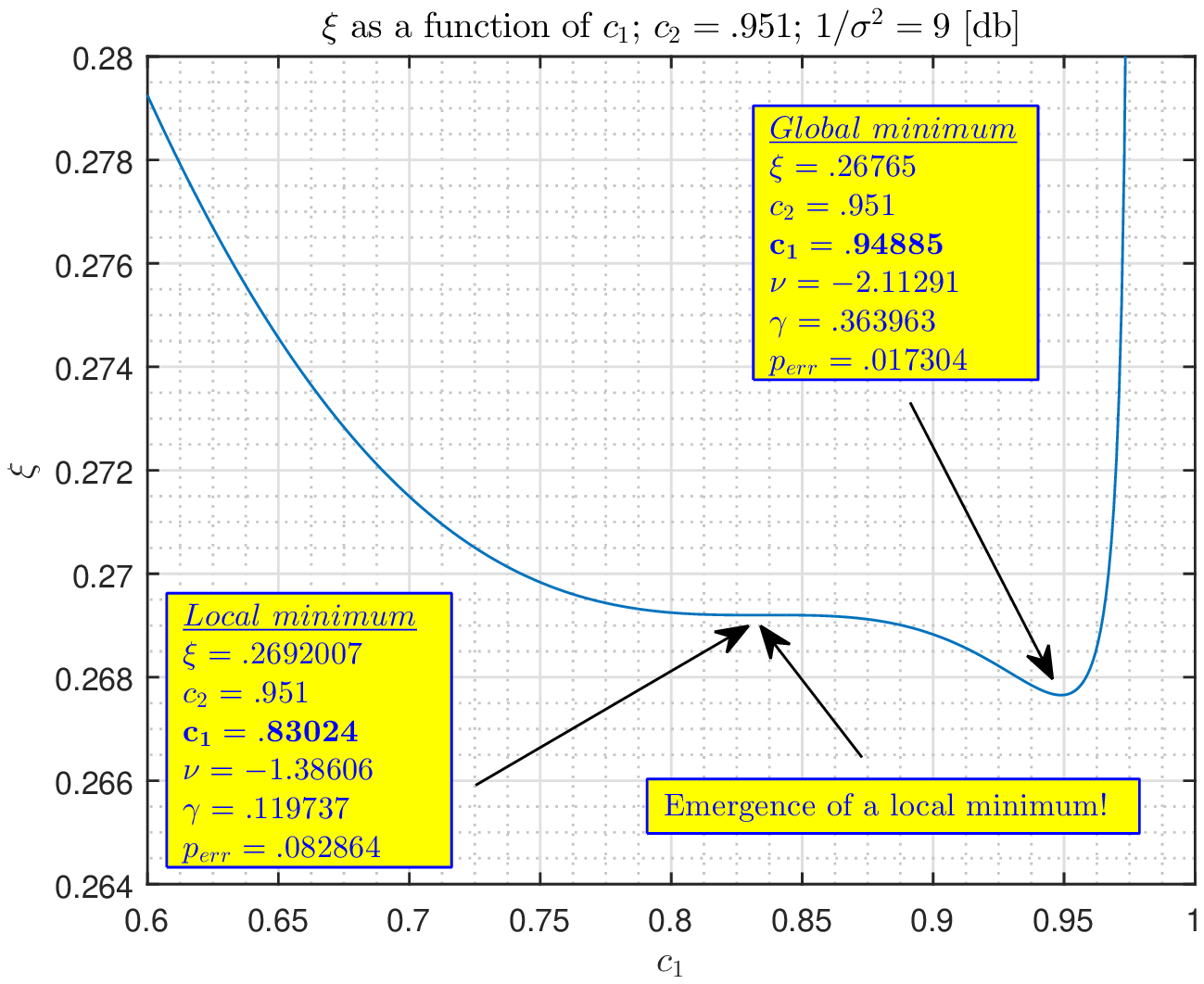,width=11.5cm,height=8cm}}
%\end{minipage}
%\begin{minipage}[b]{.5\linewidth}
%\centering
%\centerline{\epsfig{figure=finprerral08.eps,width=9cm,height=6.5cm}}
%\end{minipage}
\caption{$\xi_{RD}$ as a function of $c_1$; $\alpha=0.8$; $1/\sigma^2=9[db]$; $c_2=0.951$ (RDT - 0FL (0RSB))}
\label{fig:figdisc6}
\end{figure}
First, in Figure \ref{fig:figdisc6} we show the behavior for $c_2=0.951$ and observe the emergence of a local minimum. Then, in Figure \ref{fig:figdisc7} we show the behavior for $c_2=0.96$ and observe that the local minimum overtakes as the global. This actually might pose a serious problem for success of CLuP.
\begin{figure}[htb]
%\begin{minipage}[b]{.5\linewidth}
\centering
\centerline{\epsfig{figure=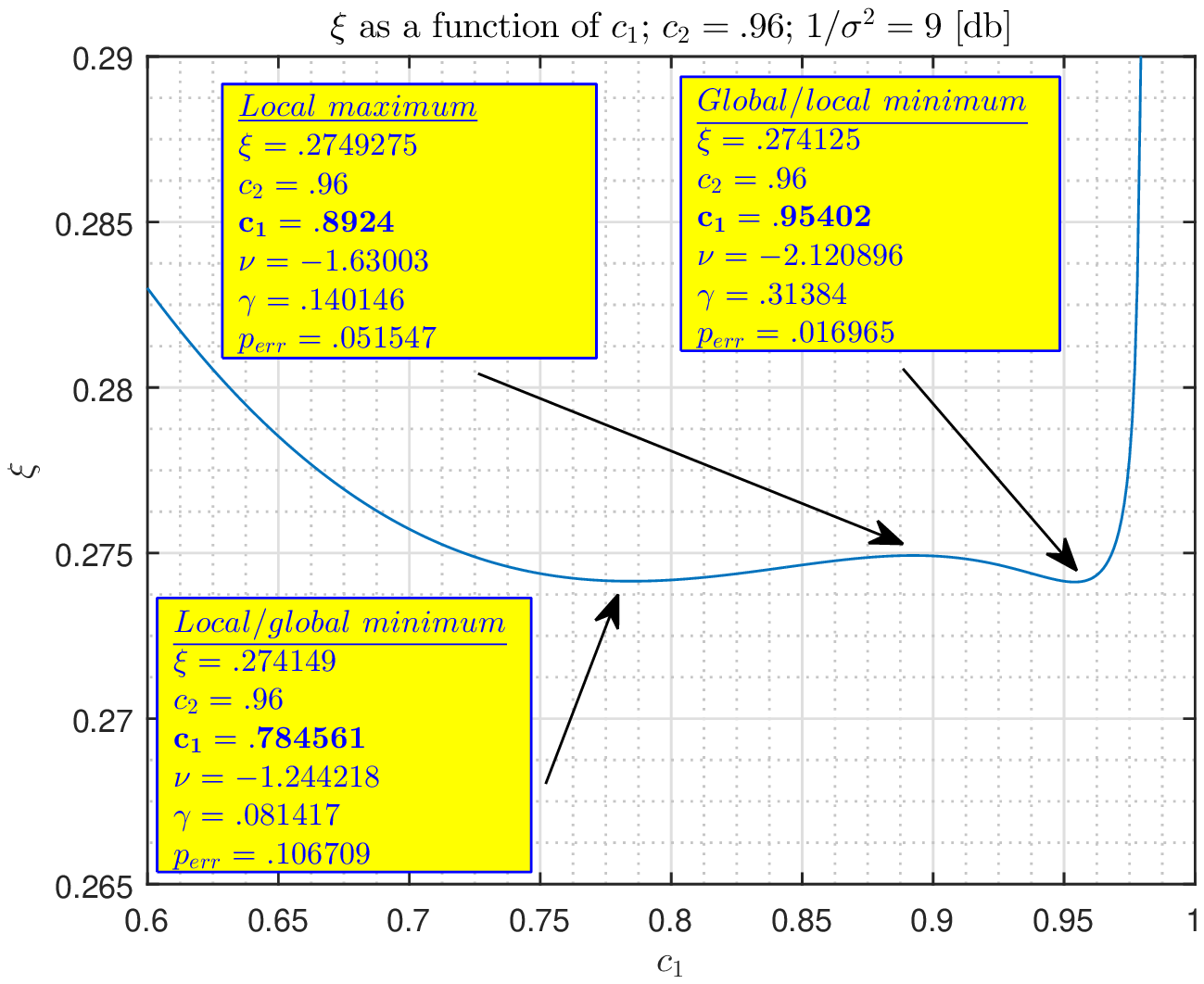,width=11.5cm,height=8cm}}
%\end{minipage}
%\begin{minipage}[b]{.5\linewidth}
%\centering
%\centerline{\epsfig{figure=finprerral08.eps,width=9cm,height=6.5cm}}
%\end{minipage}
\caption{$\xi_{RD}$ as a function of $c_1$; $\alpha=0.8$; $1/\sigma^2=9[db]$; $c_2=0.96$ (RDT - 0FL (0RSB))}
\label{fig:figdisc7}
\end{figure}
Finally, in Figure \ref{fig:figdisc8} we show the behavior for $c_2=0.975$ and observe the disappearance of a desired local minimum which might put CLuP in a position of no success.
\begin{figure}[htb]
%\begin{minipage}[b]{.5\linewidth}
\centering
\centerline{\epsfig{figure=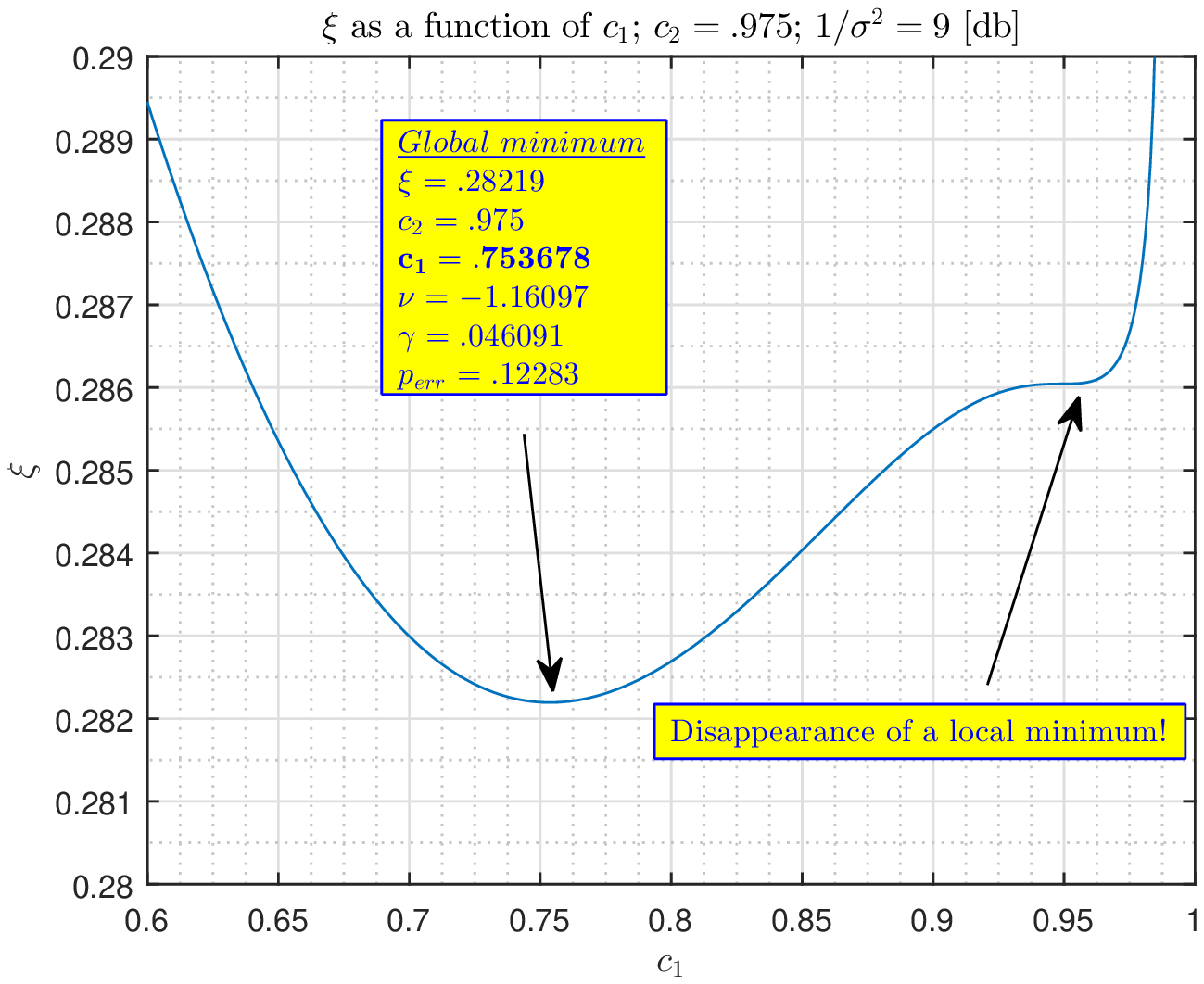,width=11.5cm,height=8cm}}
%\end{minipage}
%\begin{minipage}[b]{.5\linewidth}
%\centering
%\centerline{\epsfig{figure=finprerral08.eps,width=9cm,height=6.5cm}}
%\end{minipage}
\caption{$\xi_{RD}$ as a function of $c_1$; $\alpha=0.8$; $1/\sigma^2=9[db]$; $c_2=0.975$ (RDT - 0FL (0RSB))}
\label{fig:figdisc8}
\end{figure}
These are some interesting properties of $\xi_{RD}$. One should keep in mind though that the choice of $r$ might be such that the optimal $c_2$ is not into the range where the above discussed properties of $\xi_{RD}$ happen. Plus, one should of course always keep in mind that this is in the regime below the above discussed line of corrections where various miracles are possible which can cause the properties of $\xi_{RD}$ to change.

%%%%%%%%%%%%%%%%%%%%%%%%%%%%%%%%%%%%%%%%%%%%%%%%%%%%%%%%%%%%%%%%%
\subsubsection{CLuP -- $\xi_{RD}$ stationary points}
\label{sec:discussionclupstatpoints}
%%%%%%%%%%%%%%%%%%%%%%%%%%%%%%%%%%%%%%%%%%%%%%%%%%%%%%%%%%%%%%%%%

While the above discussion goes into tiny details to understand particular role of all key parameters, here we would like to emphasize that for the completeness we have also proceeded in the standard RDT fashion mentioned right after (\ref{eq:clup5a}). As one recalls, in (\ref{eq:clup5a}) we had
\begin{eqnarray}
\max_{\gamma_1}\min_{\x}\max_{\|\lambda\|_2=1}  & & -\|\x\|_2 +\gamma_1\lambda^T\left ([A \v]\begin{bmatrix}\x_{sol}-\x\\\sigma\end{bmatrix}\right )- \gamma_1r \nonumber \\
&& \x\in \left [-\frac{1}{\sqrt{n}},\frac{1}{\sqrt{n}}\right ]^n. \label{eq:discclup5a}
\end{eqnarray}
Combining this with Theorem \ref{thm:cluprd1} (and the analysis that preceded Theorem \ref{thm:cluprd1}) and in particular with (\ref{eq:clup17}) one has
\begin{equation}
\xi_{RD,\gamma_1}(\alpha,\sigma;c_2,c_1,\gamma,\nu)=-\sqrt{c_2}+\gamma_1(\sqrt{\alpha}\sqrt{1-2c_1+c_2+\sigma^2}+I_{22}-I_{1}+I_{21}-\nu c_1-\gamma c_2)-\gamma_1 r, \label{eq:discclup17}
\end{equation}
where
$I_{22}$, $I_{1}$, and $I_{21}$ are as given in (\ref{eq:clup16}). One can then utilize the following set of equations
\begin{eqnarray}\label{eq:discclup17a}
  \frac{d\xi_{RD,\gamma_1}(\alpha,\sigma;c_2,c_1,\gamma,\nu)}{d c_2} & = &  0\nonumber \\
  \frac{d\xi_{RD,\gamma_1}(\alpha,\sigma;c_2,c_1,\gamma,\nu)}{d c_1} & = &  0\nonumber \\
  \frac{d\xi_{RD,\gamma_1}(\alpha,\sigma;c_2,c_1,\gamma,\nu)}{d\nu} & = &  0\nonumber \\
  \frac{d\xi_{RD,\gamma_1}(\alpha,\sigma;c_2,c_1,\gamma,\nu)}{d\gamma} & = &  0\nonumber \\
  \frac{d\xi_{RD,\gamma_1}(\alpha,\sigma;c_2,c_1,\gamma,\nu)}{d\gamma_1} & = &  0.
\end{eqnarray}
After solving over $\nu$ and $\gamma_1$ things can be a bit simplified since
\begin{eqnarray}
  \nu &=& -2\sqrt{\alpha}/2/\sqrt{1-2c_1+c_2+\sigma^2}\nonumber \\
  \gamma_1 &=& 1/2/\sqrt{c_2}/(-\nu/2-\gamma).\label{eq:discclup17a1}
\end{eqnarray}
Finally, after solving over $c_2$, $c_1$, and $\gamma$ we find the following two solutions for $1/\sigma=10$[db]
\begin{eqnarray}
\xi_{RD}& = & 0.225173, c_2=0.46075, c_1=0.56459, \nu=-1.361508, \gamma=1.10981, \gamma_1=-1.716832\nonumber \\
\xi_{RD}& = & 0.225173, c_2=0.93035, c_1=0.94857, \nu=-2.450658, \gamma=0.68036, \gamma_1=0.9511982, \label{eq:discclup17b}
\end{eqnarray}
and the following two for $1/\sigma=9$[db]
\begin{eqnarray}
\xi_{RD}& = & 0.252694, c_2=0.43726, c_1=0.53669, \nu=-1.278041, \gamma=1.10130, \gamma_1=-1.635647\nonumber \\
\xi_{RD}& = & 0.252694, c_2=0.92731, c_1=0.93236, \nu=-2.060218, \gamma=0.45413, \gamma_1=0.901472. \label{eq:discclup17c}
\end{eqnarray}
We have found no other stationary points and the above two actually exactly correspond to the two shown later on in Figures \ref{fig:figdisc12} and \ref{fig:figdisc12a}.

%%%%%%%%%%%%%%%%%%%%%%%%%%%%%%%%%%%%%%%%%%%%%%%%%%%%%%%%%%%%%%%%%
\subsubsection{CLuP -- limiting $r$ through objective values}
\label{sec:discussioncluplimrobj}
%%%%%%%%%%%%%%%%%%%%%%%%%%%%%%%%%%%%%%%%%%%%%%%%%%%%%%%%%%%%%%%%%

Now we finally get to address the existence of a finite domain barrier on the left side of different $r_{sc}$ plots in Figure \ref{fig:figclup1}. Basically as plots indicate, below certain values of SNR $1/\sigma^2$, ceratin scalings of $r_{plt}$ might not be possible. This is of course directly related to the above discussion about the ML performance. Namely, as $r_{sc}$ (and consequently $r$) grows, the CLuP optimal $c_2$ grows as well. Due to the CLuP's structure $c_2$ can not grow above $1$. This in turn effectively imposes the limit on $r$ and $r_{sc}$ (of course, both, $r$ and $r_{sc}$ are basically without upper limits; however, raising them above ceratin values may be useless for the whole CLuP concept). What one might expect is that when optimal $c_2=1$ is such that the achieving $r$ is matching the optimal $\xi_{p}^{(ml)}(\alpha,\sigma)$ (obtained after the optimization over $c_1$) then CLuP's performance matches ML. Or alternatively, when one switches to the RDT terrain, one might expect that when optimal $c_2=1$ is such that the achieving $r$ is matching the optimal $\xi_{RD}^{(ml)}(\alpha,\sigma)$ (obtained again after the optimization over $c_1$) then CLuP's performance matches ML. This though may not even be the best one can do. However, before getting to this we first in Figure \ref{fig:figdisc10} show the limiting upper values that $r_{sc}$ can take based on the above reasoning. Namely, as the above suggests one can have as the upper limit $r_{sc}=\xi_{p}^{(ml)}/r_{plt}$. On the other hand, as we have mentioned when discussing the ML performance, in certain range of SNR one might need to correct the values for $\xi_{p}^{(ml)}$. For such a correction we utilize the values obtained through the 1FL RDT and refer to them as $\xi_{p}^{(1FL ML)}$.
\begin{figure}[htb]
%\begin{minipage}[b]{.5\linewidth}
\centering
\centerline{\epsfig{figure=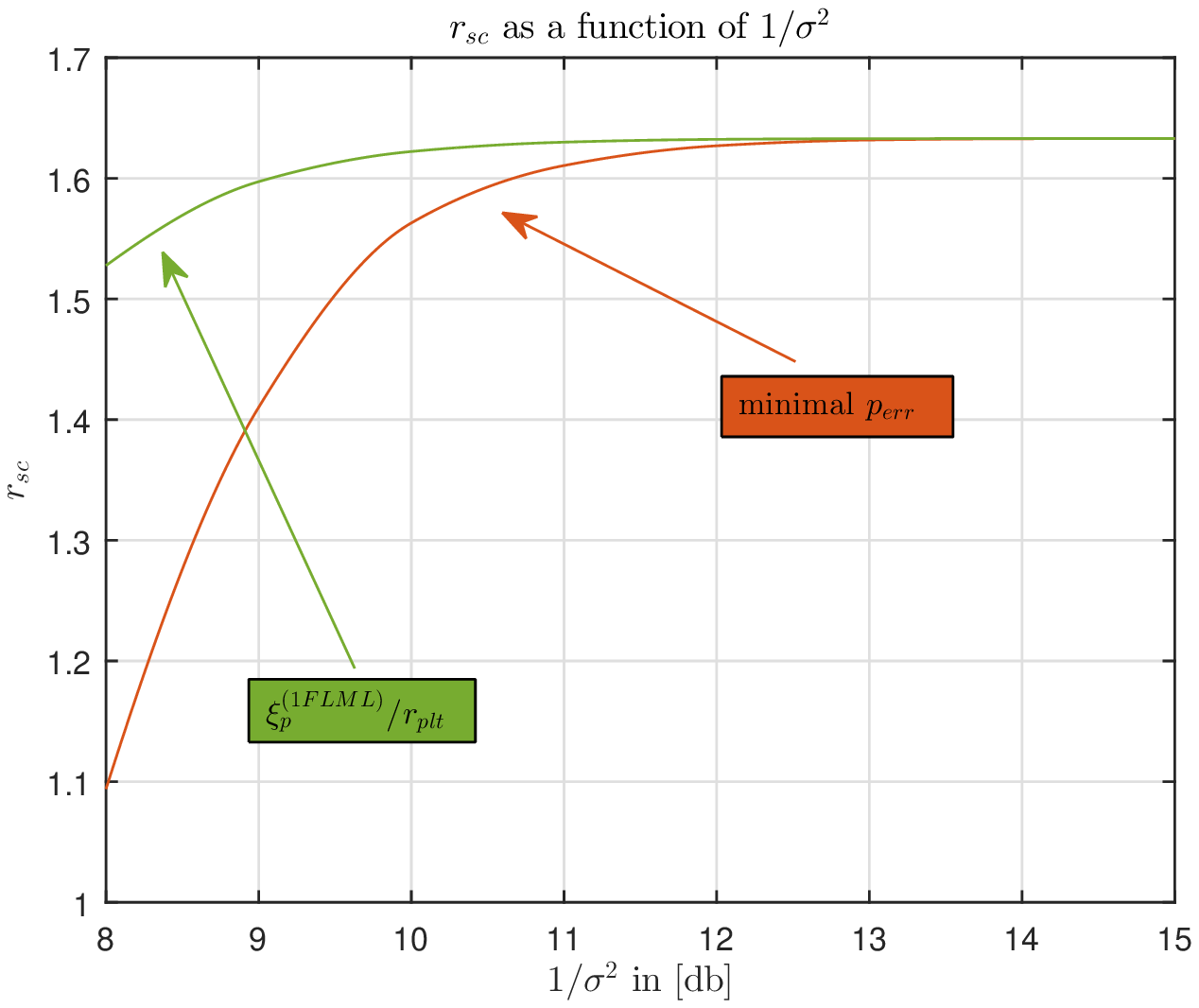,width=11.5cm,height=8cm}}
%\end{minipage}
%\begin{minipage}[b]{.5\linewidth}
%\centering
%\centerline{\epsfig{figure=finprerral08.eps,width=9cm,height=6.5cm}}
%\end{minipage}
\caption{$r_{sc}=\xi_{p}^{(1FL ML)}/r_{plt}$ or is chosen so that $p_{err}^{(CLuP)}$ is minimal and given as a function of $1/\sigma^2$; $\alpha=0.8$}
\label{fig:figdisc10}
\end{figure}
%\begin{figure}[htb]
%%\begin{minipage}[b]{.5\linewidth}
%\centering
%\centerline{\epsfig{figure=plotcluprscultMLheoryasp08PAP.eps,width=11.5cm,height=8cm}}
%%\end{minipage}
%%\begin{minipage}[b]{.5\linewidth}
%%\centering
%%\centerline{\epsfig{figure=finprerral08.eps,width=9cm,height=6.5cm}}
%%\end{minipage}
%\caption{$r_{sc}=\xi_{p}^{(1FL ML)}/r_{plt}$ as a function of $1/\sigma^2$; $\alpha=0.8$}
%\label{fig:figdisc9}
%\end{figure}
It is interesting to note that based on Figure \ref{fig:figdisc10} some values of $r_{sc}$ might be restricted, but all the three values discussed earlier in Figure \ref{fig:figclup1} remain permissible. In the following subsection we discuss a different limiting strategy.

%%%%%%%%%%%%%%%%%%%%%%%%%%%%%%%%%%%%%%%%%%%%%%%%%%%%%%%%%%%%%%%%%
\subsubsection{CLuP -- limiting $r$ through minimal $p_{err}$}
\label{sec:discussioncluplimrperr}
%%%%%%%%%%%%%%%%%%%%%%%%%%%%%%%%%%%%%%%%%%%%%%%%%%%%%%%%%%%%%%%%%

The above choice of limiting $r$ seems reasonable (in fact when it comes to achieving ML may be the most reasonable). However, one can ignore ML for a moment and wonder what would be the best way to design CLuP so that it achieves the best possible performance. The immediate question would be what would be the criteria to determine what the best possible performance is. There are of course many criteria that one can consider but if we just stick with the probability of error $p_{err}^{(CLuP)}$ then seemingly the most natural way would be to choose $r_{sc}$ as to minimize $p_{err}^{(CLuP)}$. Recalling on (\ref{eq:clup18c}), this essentially means that one should choose $r_{sc}$ so that $\hat{\nu}^{(CLuP)}$ is minimized. The results that we obtained following this strategy are shown in Figures \ref{fig:figdisc10} and \ref{fig:figdisc11a} (the choice $c_2=0.9979$ mentioned earlier is now clear from Figure \ref{fig:figdisc11a}). Moreover, the resulting $p_{err}^{(CLuP)}$ is exactly the dashed curve in Figure \ref{fig:figclup1} to which we refer as the ultimate CLuP calculated performance. As Figure \ref{fig:figclup1} is mainly concerned with the effect of changing $r_{sc}$ rather than with this type of subtlety, we below in
Figure \ref{fig:figdisc11} show once again this curve together with the ML one obtained through 1FL RDT.
\begin{figure}[htb]
%\begin{minipage}[b]{.5\linewidth}
\centering
\centerline{\epsfig{figure=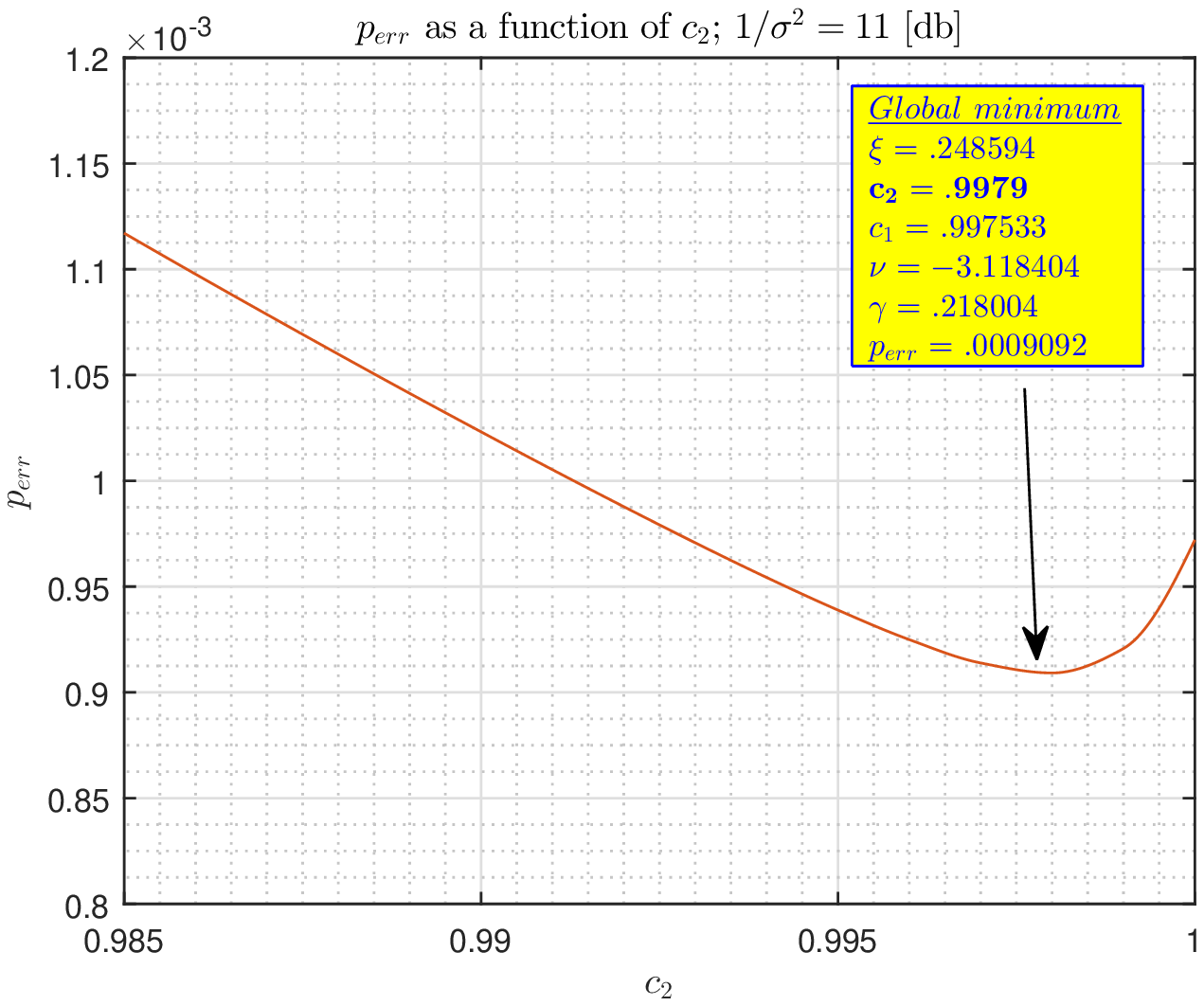,width=11.5cm,height=8cm}}
%\end{minipage}
%\begin{minipage}[b]{.5\linewidth}
%\centering
%\centerline{\epsfig{figure=finprerral08.eps,width=9cm,height=6.5cm}}
%\end{minipage}
\caption{$p_{err}$ as a function of $c_2$; $1/\sigma^2=11$[db]; $\alpha=0.8$}
\label{fig:figdisc11a}
\end{figure}
\begin{figure}[htb]
%\begin{minipage}[b]{.5\linewidth}
\centering
\centerline{\epsfig{figure=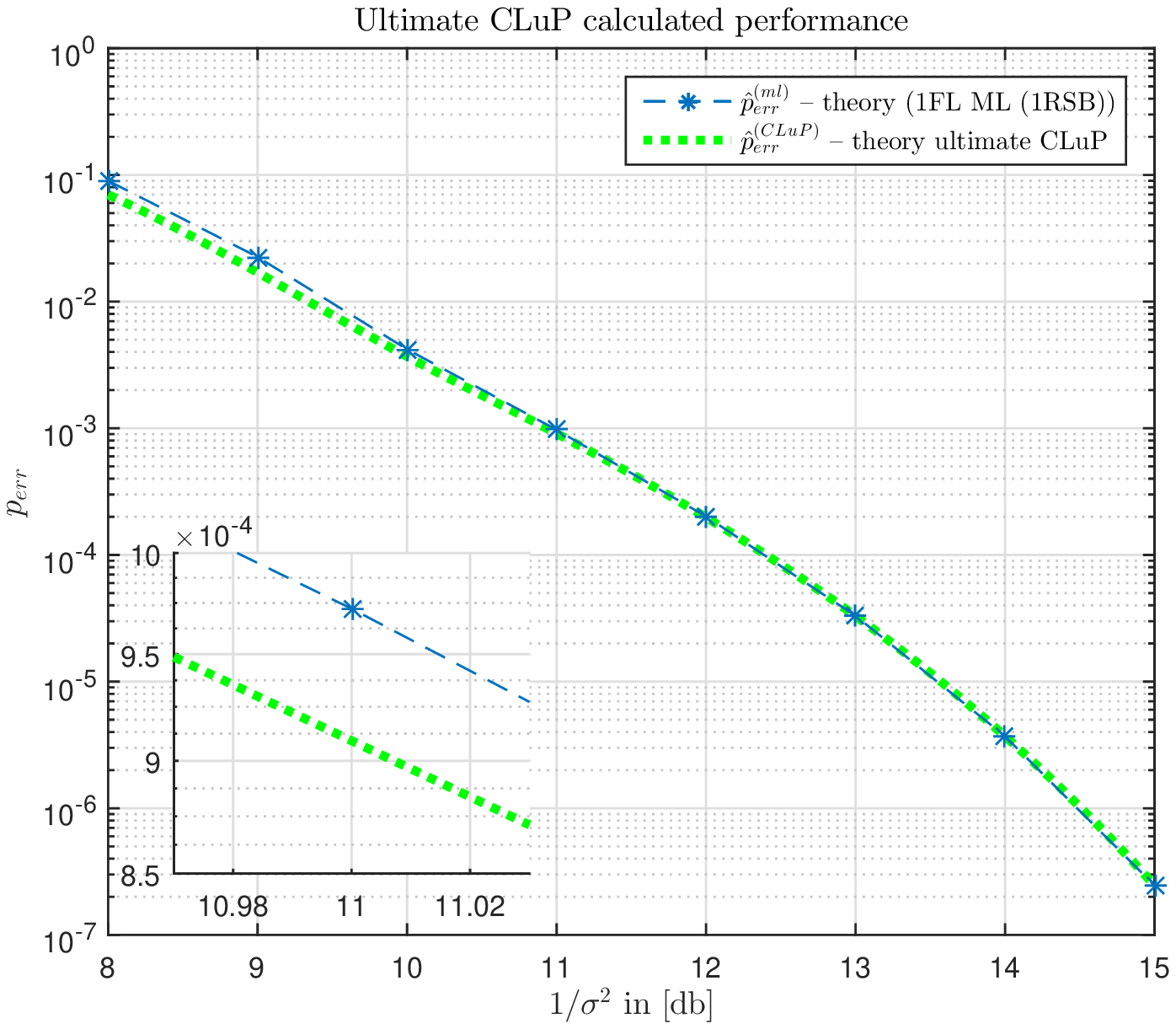,width=11.5cm,height=8cm}}
%\end{minipage}
%\begin{minipage}[b]{.5\linewidth}
%\centering
%\centerline{\epsfig{figure=finprerral08.eps,width=9cm,height=6.5cm}}
%\end{minipage}
\caption{$r_{sc}$ chosen such that $p_{err}^{(CLuP)}$ is minimal and given as a function of $1/\sigma^2$; $\alpha=0.8$}
\label{fig:figdisc11}
\end{figure}
Since the curves are close to each other we also provide some of the numerical values in Table \ref{tab:tabclup1}. The comparison of the values is not so interesting in the regime where $ 1/\sigma^2 \leq 10.7105$[db] as one may expect further corrections to $\hat{p}_{err}^{(ml)}$ (we do not believe that they are significant but, as we will see later on when discussing results obtained from numerical experiments, they are likely to push $\hat{p}_{err}^{(ml)}$ a bit below the values given for $\hat{p}_{err}^{(CLuP)}$) and $\hat{p}_{err}^{(CLuP)}$ may need to be readjusted as discussed below depending on the way how the appearance of local optima is handled.  It is interesting though that for $ 1/\sigma^2 \geq 10.7105$[db] (where one expects no or very mild further corrections, quite possibly in some regimes both visually and computationally not detectable) $\hat{p}_{err}^{(CLuP)}$ remains below $\hat{p}_{err}^{(ml)}$.
\begin{table}[h]
\caption{Numerical values for $\hat{p}_{err}^{(ml)}$ and $\hat{p}_{err}^{(CLuP)}$ that correspond to the data in Figure \ref{fig:figdisc11}}\vspace{.1in}
\hspace{-0in}\centering
\footnotesize{
\begin{tabular}{||c||c|c|c|c|c|c|c|c||}\hline\hline
$ 1/\sigma^2 $[db] & $8  $ & $9  $ & $10  $ & $11  $ & $12  $ & $13  $ & $14  $ & $15  $ \\ \hline\hline
$\hat{p}_{err}^{(ml)}$ & $9.00e-02  $ & $2.25e-02  $ & $4.20e-03  $ & $9.72e-04  $ & $2.01e-04  $ & $3.30e-05  $ & $3.70e-06  $ & $2.46e-07  $ \\ \hline
$\hat{p}_{err}^{(CLuP)}$ & $6.98e-02  $ & $1.70e-02  $ & $3.69e-03  $ & $9.09e-04  $ & $1.97e-04  $ & $3.29e-05  $ & $3.70e-06  $ & $2.46e-07  $ \\ \hline\hline
\end{tabular}}
\label{tab:tabclup1}
\end{table}

%%%%%%%%%%%%%%%%%%%%%%%%%%%%%%%%%%%%%%%%%%%%%%%%%%%%%%%%%%%%%%%%%
\subsubsection{CLuP -- limiting $r$ through appearance of local/global optima}
\label{sec:discussioncluplimrlocglobopt}
%%%%%%%%%%%%%%%%%%%%%%%%%%%%%%%%%%%%%%%%%%%%%%%%%%%%%%%%%%%%%%%%%

In Figures \ref{fig:figdisc12} and \ref{fig:figdisc12a} we present $\xi$ as a function of $c_2$. One can now clearly see the appearance of the local optima which for $1/\sigma^2=9$ even overtake as global optima. Moreover, one can restrict $r_{sc}$ so that these regimes are not reached.
\begin{figure}[htb]
\begin{minipage}[b]{.5\linewidth}
\centering
\centerline{\epsfig{figure=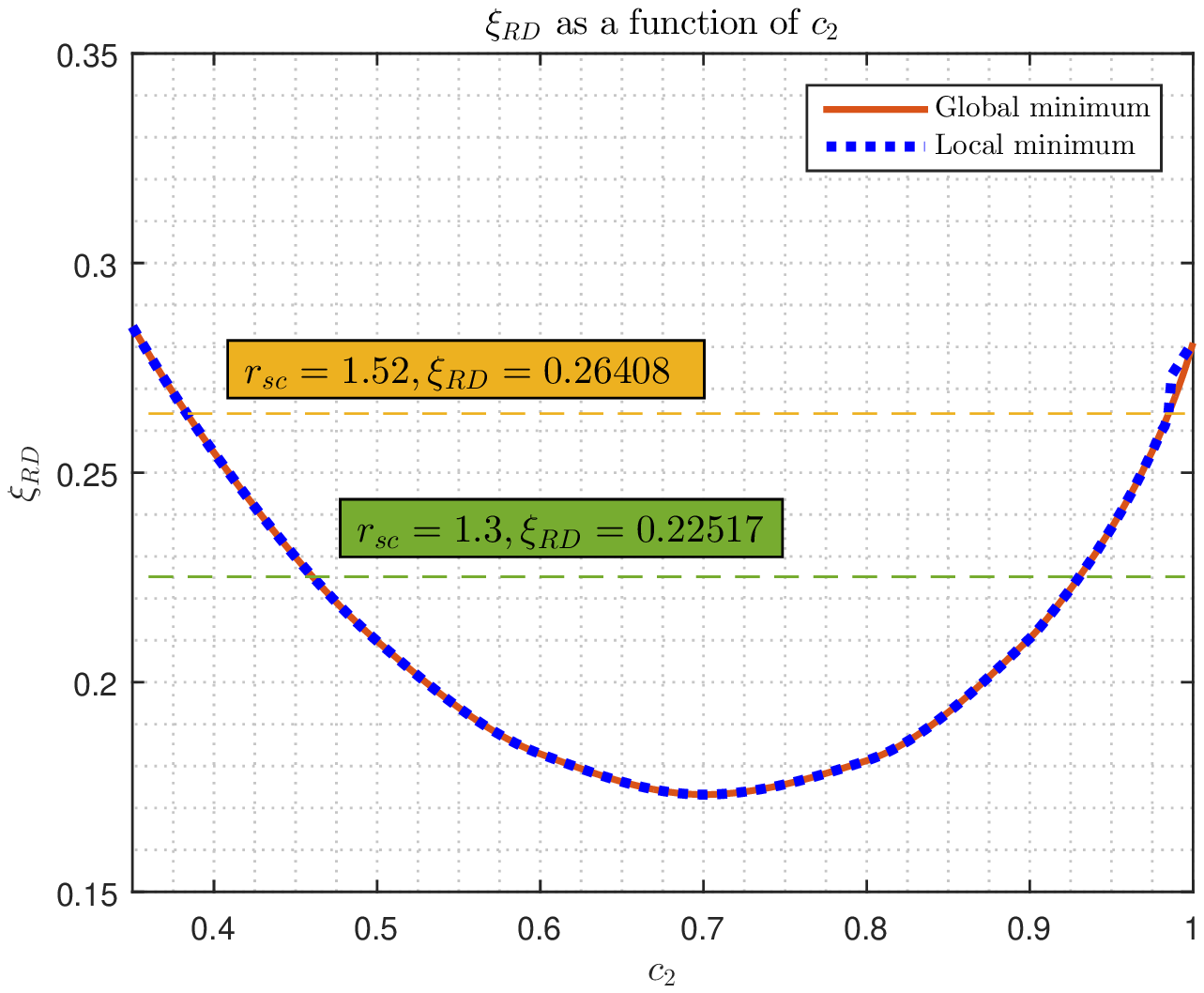,width=8cm,height=6cm}}
\end{minipage}
\begin{minipage}[b]{.5\linewidth}
\centering
\centerline{\epsfig{figure=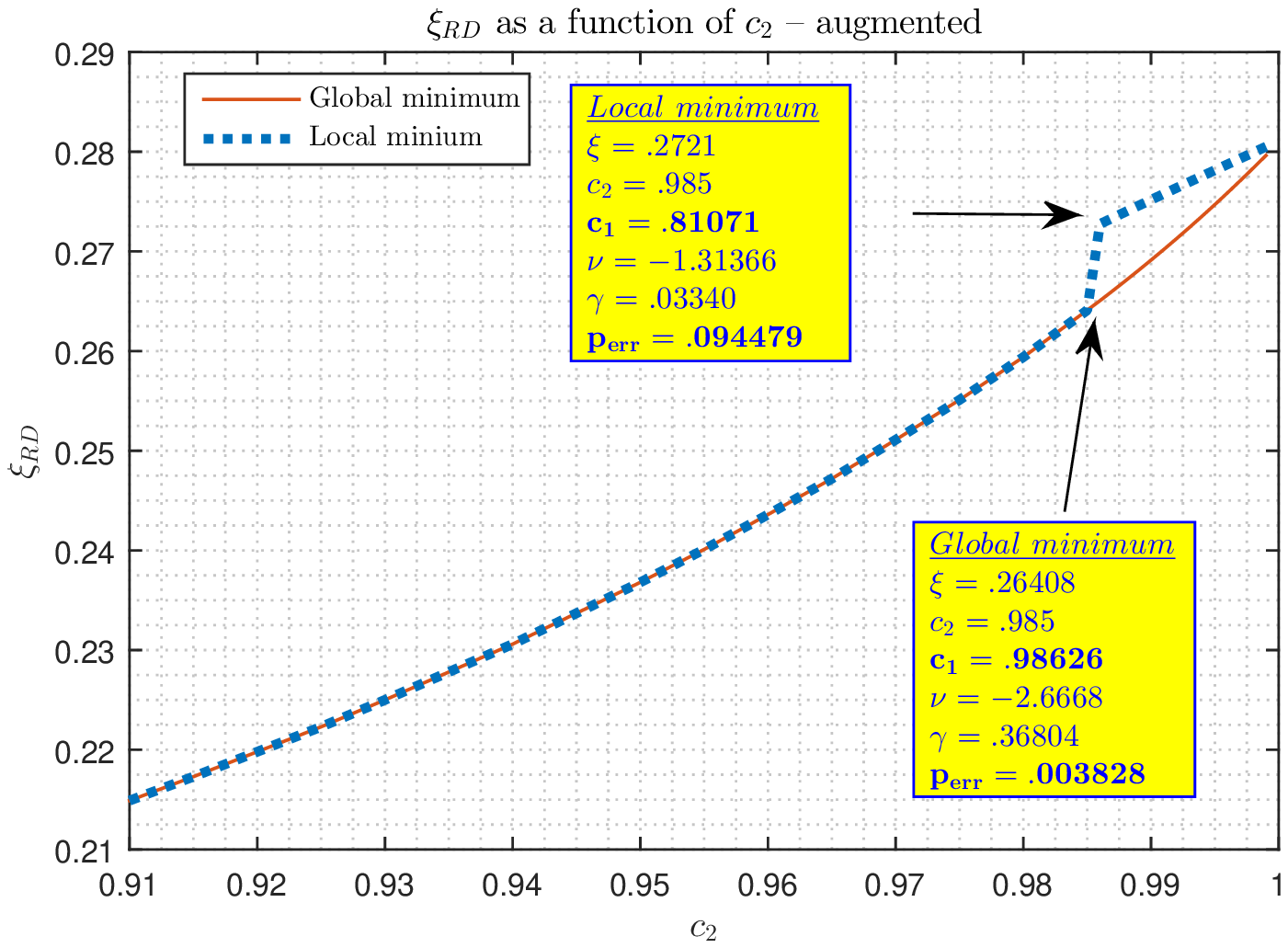,width=8cm,height=6cm}}
\end{minipage}
\vspace{-0.3in}
\caption{$\xi_{RD}$ as a function of $c_2$; $1/\sigma^2=10$ db a) left - full range of $c_2$, b) right - appearance of local optima}
%\vspace{-0.12in}
\label{fig:figdisc12}
\end{figure}

\begin{figure}[htb]
\begin{minipage}[b]{.5\linewidth}
\centering
\centerline{\epsfig{figure=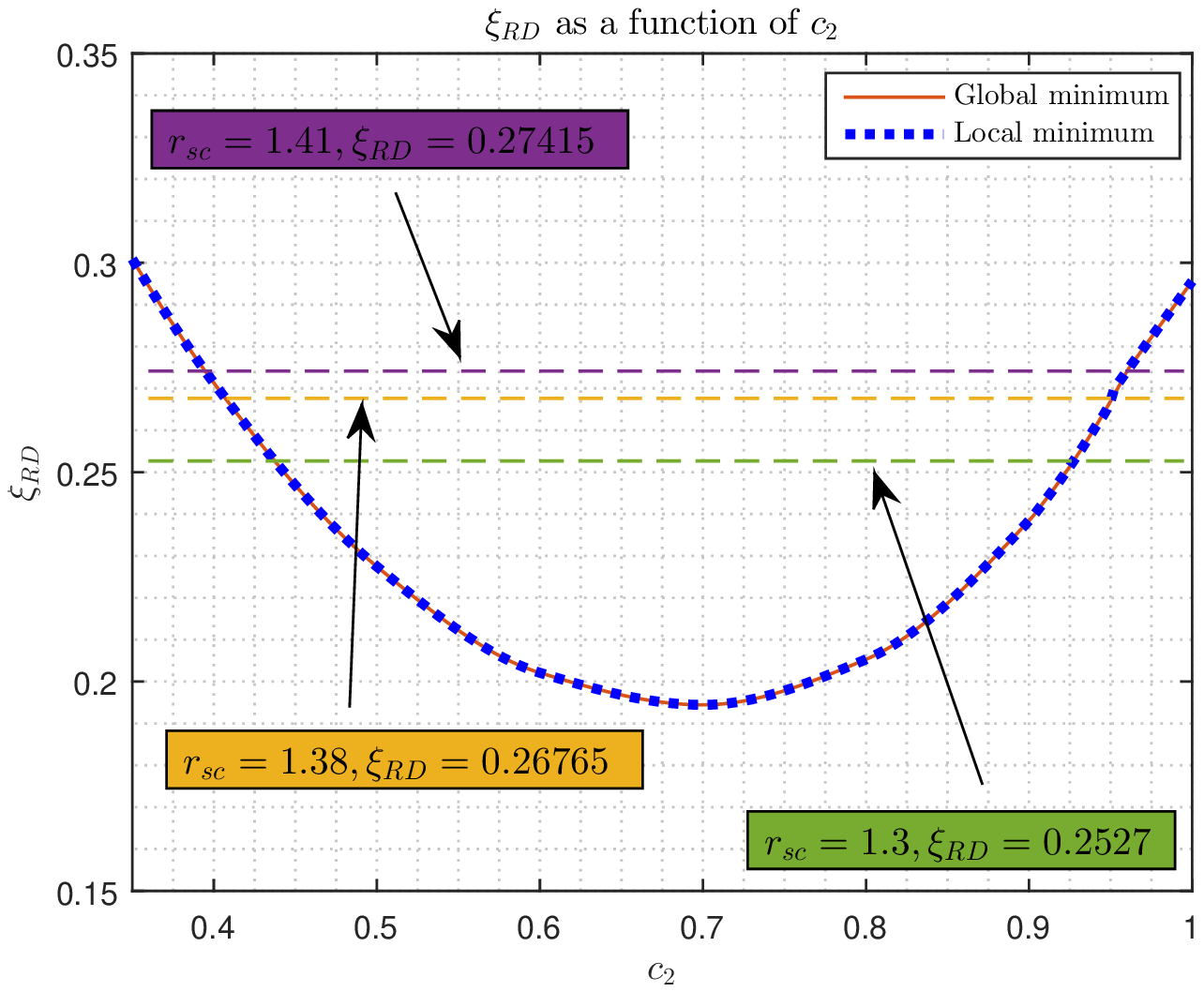,width=8cm,height=6cm}}
\end{minipage}
\begin{minipage}[b]{.5\linewidth}
\centering
\centerline{\epsfig{figure=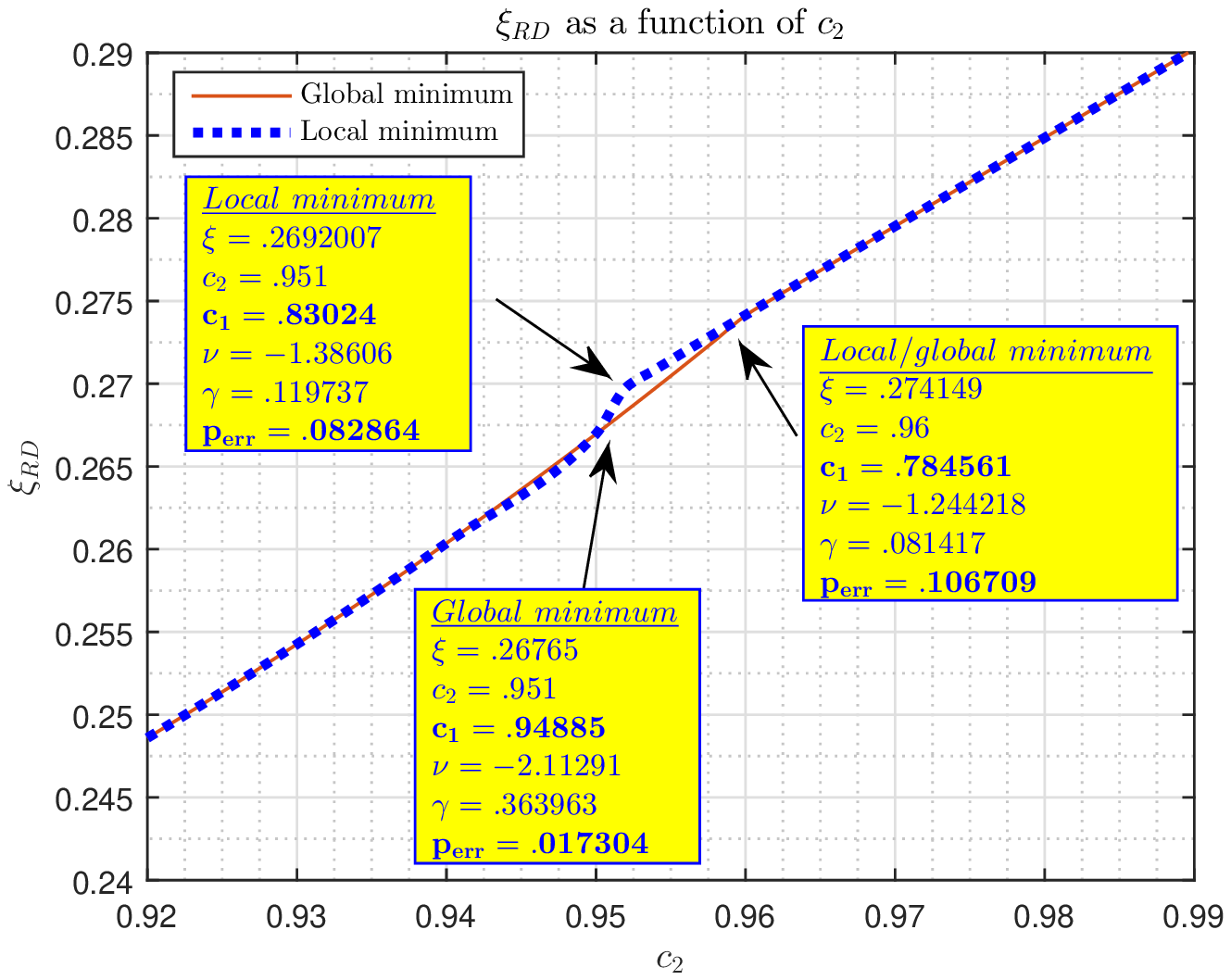,width=8cm,height=6cm}}
\end{minipage}
\vspace{-0.3in}
\caption{$\xi_{RD}$ as a function of $c_2$; $1/\sigma^2=9$ db a) left - full range of $c_2$, b) right - appearance of local optima}
%\vspace{-0.12in}
\label{fig:figdisc12a}
\end{figure}

%%%%%%%%%%%%%%%%%%%%%%%%%%%%%%%%%%%%%%%%%%%%%%%%%%%%%%%%%%%%%%%%%
\subsubsection{CLuP -- avoiding lower stationary point}
\label{sec:discussionclupavoidloerstatpts}
%%%%%%%%%%%%%%%%%%%%%%%%%%%%%%%%%%%%%%%%%%%%%%%%%%%%%%%%%%%%%%%%%

As earlier calculations and Figures \ref{fig:figdisc12} and \ref{fig:figdisc12a} indicate there are clearly two stationary points that might be of interest when looking at the CLuP's performance. One is obviously interested only in the one that is to the right in both figures. That immediately of course raises the question as to how the CLuP performs when it comes to avoiding the so-called lower stationary point. The key to understanding that is the CLuP's first step (iteration) which amounts to determining $\x^{(1)}$ as
\begin{eqnarray}
\x^{(1)}=\frac{\x^{(1,s)}}{\|\x^{(1,s)}\|_2} \quad \mbox{with}\quad \x^{(1,s)}=\mbox{arg}\min_{\x} & & -(\x^{(0)})^T\x  \nonumber \\
\mbox{subject to} & & \|\y-A\x\|\leq r\nonumber \\
&& \x\in \left [-\frac{1}{\sqrt{n}},\frac{1}{\sqrt{n}}\right ]^n. \label{eq:avoidclup1}
\end{eqnarray}
This is exactly the same problem that we considered in great detail in \cite{Stojnicclupcmpl19}. Moreover, after a bit of juggling one arrives at its a more relevant version
\begin{eqnarray}
\xi_{p,1}(\alpha,\sigma,c_{1,z},s_1)=\lim_{n\rightarrow\infty}\frac{1}{\sqrt{n}}\mE \min_{\z} & & \|\sigma\v+A\z\|_2  \nonumber \\
\mbox{subject to} & & \|\z\|_2^2=c_{1,z}\nonumber \\
& & (\x^{(0)})^T\z=s_1 \nonumber \\
&& \z\in \left [0,2/\sqrt{n}\right ]^n. \label{eq:avoidclup1a3}
\end{eqnarray}
The following theorem is proven in \cite{Stojnicclupcmpl19}.
\begin{theorem}(CLuP -- RDT estimate -- first iteration \cite{Stojnicclupcmpl19}) Set
\begin{equation}
I_{box,1}^{(1)}(\gamma,\nu)=\rho I_{1,1}(\gamma,\nu)+(1-\rho) I_{1,1}(\gamma,-\nu)+\rho I_{2,1}(\gamma,\nu)+(1-\rho) I_{2,1}(\gamma,-\nu),\label{eq:avoidclup15}
\end{equation}
where
\begin{eqnarray}
I_{1,1}(\gamma,\nu) &  = & -(exp(-0.5(4 \gamma + \nu)^2) (\nu - 4 \gamma) + \sqrt{\pi/2} (\nu^2 + 1) \erf(2\sqrt{2}\gamma + 1/\sqrt{2}\nu) \nonumber \\
& & - \sqrt{\pi/2} (\nu^2 + 1) \erf(\nu/\sqrt{2}) - exp(-0.5 \nu^2) \nu)/(4 \sqrt{2 \pi}\gamma) \nonumber \\
I_{2,1}(\gamma,\nu)  &  = & (4\gamma+2\nu).5\erfc((4\gamma+\nu)/\sqrt{2})-2exp(-1/2(4\gamma+\nu)^2)/\sqrt{2\pi}.
\label{eq:avoidclup16}
\end{eqnarray}
Moreover, set
\begin{equation}
\xi_{RD}^{(1)}(\alpha,\sigma;c_{1,z},s_1,\gamma,\nu)=\sqrt{\alpha}\sqrt{c_{1,z}+\sigma^2}+I_{box,1}^{(1)}(\gamma,\nu)-\nu s_1-\gamma c_{1,z}. \label{eq:avoidclup17}
\end{equation}
Let $\xi_{p,1}(\alpha,\sigma,c_{1,z},s_1)$ be as in (\ref{eq:avoidclup1a3}). Then \begin{equation}
\xi_{p,1}(\alpha,\sigma,c_{1,z},s_1)= \max_{\gamma,\nu}\xi_{RD}^{(1)}(\alpha,\sigma;c_{1,z},s_1,\gamma,\nu).\label{eq:thmavoidcluprd1}
\end{equation}
Consequently,
\begin{equation}
\min_{c_{1,z}}\xi_{p,1}(\alpha,\sigma,c_{1,z},s_1)= \min_{c_{1,z}}\max_{\gamma,\nu}\xi_{RD}^{(1)}(\alpha,\sigma;c_{1,z},s_1,\gamma,\nu).
% \mbox{and}\quad \max_{c_2}\min_{c_1}\xi_{p}(\alpha,\sigma;c_2,c_1)\geq \max_{c_2}\min_{c_1}\xi_{RD}(\alpha,\sigma;c_2,c_1,\gamma,\nu).
\label{eq:thmcluprd2}
\end{equation}\label{thm:avoidcluprd1}
\end{theorem}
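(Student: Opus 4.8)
The plan is to reproduce the same three-step RDT argument used for Theorem~\ref{thm:cluprd1}, now applied to the first-iteration objective (\ref{eq:avoidclup1a3}). The starting observation is that the problem in (\ref{eq:avoidclup1}) is exactly the one analyzed in the companion paper \cite{Stojnicclupcmpl19}, and that the change of variables $\z=\x_{sol}-\x$ turns the cube constraint $\x\in[-1/\sqrt{n},1/\sqrt{n}]^n$ into the shifted box $\z\in[0,2/\sqrt{n}]^n$ appearing in (\ref{eq:avoidclup1a3}). Because $\x^{(0)}$ has entries $\pm1/\sqrt{n}$, the fraction $\rho$ of its coordinates that equal $+1/\sqrt{n}$ controls how the linear term $\nu(\x^{(0)})^T\z$ couples to each coordinate, and this is precisely what produces the $\rho$-weighted combination of $I_{1,1}(\gamma,\nu)$ and $I_{1,1}(\gamma,-\nu)$ (and likewise for $I_{2,1}$) in (\ref{eq:avoidclup15}).

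First I would write the objective as $\|\sigma\v+A\z\|_2=\max_{\|\lambda\|_2=1}\lambda^T[A\ \v]\begin{bmatrix}\z\\\sigma\end{bmatrix}$ and Lagrangianize the two equality constraints $\|\z\|_2^2=c_{1,z}$ and $(\x^{(0)})^T\z=s_1$ with multipliers $\gamma$ and $\nu$, exactly as in the passage from (\ref{eq:clup7}) to (\ref{eq:clup9}). Passing to the random dual as in (\ref{eq:clup10}) replaces $\lambda^T[A\ \v]\begin{bmatrix}\z\\\sigma\end{bmatrix}$ by $\lambda^T\g\sqrt{\|\z\|_2^2+\sigma^2}-\|\lambda\|_2(\h^T\z+h_0\sigma)$. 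The inner maximization over $\lambda$ is then trivial, and using $\|\g\|_2/\sqrt{n}\rightarrow\sqrt{\alpha}$ together with the constraint $\|\z\|_2^2=c_{1,z}$ produces the first term $\sqrt{\alpha}\sqrt{c_{1,z}+\sigma^2}$ of (\ref{eq:avoidclup17}), while the $h_0\sigma$ contribution concentrates to zero.

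Next I would handle the residual box-constrained minimization $\min_{\z\in[0,2/\sqrt{n}]^n}-\h^T\z+\gamma\|\z\|_2^2+\nu(\x^{(0)})^T\z$, which is separable across coordinates. Splitting the sum according to the sign of $x^{(0)}_i$ gives two families of identical scalar problems, weighted by $\rho$ and $1-\rho$, each a piecewise-quadratic minimization of $-h_iz_i+\gamma z_i^2\pm(\nu/\sqrt{n})z_i$ over $z_i\in[0,2/\sqrt{n}]$. The unconstrained minimizer is clipped to the box, yielding an interior quadratic branch and a boundary (clamped) branch; taking the Gaussian expectation over $h_i$ of each branch is the routine integral computation that outputs the closed forms $I_{1,1}$ and $I_{2,1}$ in (\ref{eq:avoidclup16}), with $\gamma,\nu$ understood as their $\sqrt{n}$-scaled versions. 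Assembling these pieces gives $I_{box,1}^{(1)}$ and hence (\ref{eq:avoidclup17}), establishing the lower-bound direction of (\ref{eq:thmavoidcluprd1}) via the generic RDT/Gordon inequality.

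The hard part is upgrading the inequality to the claimed \emph{equality}: unlike Theorem~\ref{thm:cluprd1}, the conclusion asserts strong random duality, and the obstacle is that the level-set constraint $\|\z\|_2^2=c_{1,z}$ is not convex, so the usual Slater argument does not apply directly. The resolution I would pursue rests on the hidden convexity of norm-constrained quadratics (the trust-region / S-procedure phenomenon): with $\gamma$ ranging over all of $\mR$ as the multiplier of the single quadratic equality, the Lagrangian relaxation is tight and the reduced deterministic problem at fixed $c_{1,z}$ and $s_1$ has no duality gap. Strong deterministic duality then lifts to strong random duality by exactly the mechanism cited after Theorem~\ref{thm:cluprd1} (proven in \cite{StojnicDiscPercp13,StojnicGenLasso10,StojnicGenSocp10,StojnicPrDepSocp10,StojnicRegRndDlt10}), turning the generic $\geq$ into the equality (\ref{eq:thmavoidcluprd1}); minimizing both sides over $c_{1,z}$ then yields (\ref{eq:thmcluprd2}). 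Confirming that this $\gamma$-Lagrangianization certifies a gap-free matching rather than merely a bound, in the presence of the additional box and the linear constraint, is the genuinely delicate step, and is precisely what \cite{Stojnicclupcmpl19} supplies.
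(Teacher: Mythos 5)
Your reconstruction of $\xi_{RD}^{(1)}$ follows the same three-step RDT pipeline that the paper spells out in Section~\ref{sec:rdt1} for Theorem~\ref{thm:cluprd1} and, for the present theorem, compresses into a citation: the paper's entire proof is that the result follows from the general RDT concepts, the discussion in \cite{Stojnicclupcmpl19}, and the assertion that strong random duality ``trivially'' holds. For the lower-bound direction and the closed forms $I_{1,1}$ and $I_{2,1}$ --- including the $\rho$-weighted split over the signs of the coordinates of $\x^{(0)}$, the clipping of the per-coordinate quadratic to the box $[0,2/\sqrt{n}]$ (the lower clamp contributing nothing and the upper clamp producing the $4\gamma+\nu$ thresholds), and the resulting Gaussian integrals --- you are reconstructing exactly the intended argument, in considerably more detail than the paper provides.

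The one place you genuinely part ways with the paper is the justification of the equality in (\ref{eq:thmavoidcluprd1}). You reach for the trust-region/S-lemma ``hidden convexity'' of the quadratic equality constraint; but that phenomenon is a single-constraint result, and extended trust-region problems with additional linear or box constraints are known to exhibit duality gaps in general, so this mechanism cannot by itself certify gap-free matching here --- you correctly flag the difficulty and defer it to \cite{Stojnicclupcmpl19}. The paper's reason for calling strong random duality trivial is different and cleaner: the first-iteration program (\ref{eq:avoidclup1}) has a \emph{linear} objective over a convex feasible set (in contrast to the converged problem behind Theorem~\ref{thm:cluprd1}, whose objective $-\|\x\|_2$ is concave, which is exactly why only an inequality is claimed there), so strong deterministic duality holds and the cited RDT results lift it to strong random duality; the sphere and hyperplane slicing in (\ref{eq:avoidclup1a3}) is bookkeeping layered on an already-convex problem. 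Since both you and the paper ultimately lean on \cite{Stojnicclupcmpl19} for the full verification, this is not a fatal gap, but the S-procedure route as you state it would not survive scrutiny, and you miss the structural feature (linearity of the first-iteration objective) that the paper is implicitly invoking.
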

\begin{proof}
Follows from the general RDT concepts presented in \cite{StojnicCSetam09,StojnicISIT2010binary,StojnicDiscPercp13,StojnicGenLasso10,StojnicGenSocp10,StojnicPrDepSocp10,StojnicRegRndDlt10}, the discussion presented in \cite{Stojnicclupcmpl19}, and the fact that the \bl{\textbf{strong random duality}} trivially holds.
\end{proof}
The analysis in \cite{Stojnicclupcmpl19} proceeds further and by utilizing the strong random duality characterizes the exact estimates for all other quantities that may be of interest. To do so it first solves the following optimization problem
\begin{eqnarray}
\{\hat{\nu}^{(1)},\hat{\gamma}^{(1)},\hat{c}_{1,z}^{(1)},\hat{s}_1^{(1)}\}=\mbox{arg} \min_{s_1} & & s_1 \nonumber \\
\mbox{subject to} & & \min_{0\leq c_{1,z}\leq 4}\max_{\gamma,\nu}\xi_{RD}^{(1)}(\alpha,\sigma;c_{1,z},s_1,\gamma,\nu),\label{eq:avoidclup17a}
\end{eqnarray}
and then defines
\begin{eqnarray}
s_{x,1}(\gamma,\nu) & = & -\nu/2/\gamma(.5\erfc(\nu/\sqrt{2})-.5\erfc((\nu+4\gamma)/\sqrt{2}))\nonumber \\
& & +1/2/\gamma/\sqrt{2\pi}(exp(-\nu^2/2)-exp(-(4\gamma+\nu)^2/2))\nonumber \\
s_{xsq,1}(\gamma,\nu) & = & -I_{1,1}(\gamma,\nu)/\gamma\nonumber \\
s_{x,2}(\gamma,\nu) & = & 2(.5\erfc((4\gamma+\nu)/\sqrt{2}))\nonumber \\
s_{xsq,2}(\gamma,\nu) & = & 2s_{x,2},\label{eq:avoidclup18}
\end{eqnarray}
to finally obtain
\begin{eqnarray}
\mE((\x_{sol})^T\x) & = & 1-(\rho s_{x,1}(\hat{\gamma}^{(1)},\hat{\nu}^{(1)})+(1-\rho)s_{x,1}(\hat{\gamma}^{(1)},-\hat{\nu}^{(1)})+\rho s_{x,2}(\hat{\gamma}^{(1)},\hat{\nu}^{(1)})+(1-\rho)s_{x,2}(\hat{\gamma}^{(1)},-\hat{\nu}^{(1)}))\nonumber \\
\mE\|\x\|_2^2 & = & \rho s_{xsq,1}(\hat{\gamma}^{(1)},\hat{\nu}^{(1)})+(1-\rho)s_{xsq,1}(\hat{\gamma}^{(1)},-\hat{\nu}^{(1)})+\rho s_{xsq,2}(\hat{\gamma}^{(1)},\hat{\nu}^{(1)})+(1-\rho)s_{xsq,2}(\hat{\gamma}^{(1)},-\hat{\nu}^{(1)})\nonumber \\
&&+2\mE((\x_{sol})^T\x)-1.\label{eq:avoidclup21}
\end{eqnarray}
 Moreover, \cite{Stojnicclupcmpl19} also gets the estimate for the probability of error
\begin{equation}\label{eq:avoidclup21a}
  p_{err,1}=1-P\left (\z_i\leq \frac{1}{\sqrt{n}}\right )=1-\left (\rho\left (\frac{1}{2}\erfc\left ( \frac{-2\hat{\gamma}^{(1)}-\hat{\nu}^{(1)}}{\sqrt{2}}\right )\right )+(1-\rho)\left ( \frac{1}{2}\erfc\left ( \frac{-2\hat{\gamma}^{(1)}+\hat{\nu}^{(1)}}{\sqrt{2}}\right )\right )\right ).
\end{equation}
The theoretical values obtained based on Theorem \ref{thm:cluprd1} in \cite{Stojnicclupcmpl19} for various system parameters are shown in Table \ref{tab:tabavoidnum1} for two different values of SNR, $1/\sigma^2=10$[db] and $1/\sigma^2=13$[db].
\begin{table}[h]
\caption{\textbf{Theoretical} values for various system parameters obtained based on Theorem \ref{thm:avoidcluprd1}}\vspace{.1in}
\hspace{-0in}\centering
\small{
\begin{tabular}{||c||c|c||c|c||c|c|c|c||}\hline\hline
$1/\sigma^2 $[db]  & $\hat{\nu}$ & $\hat{\gamma}$ & $\hat{c}_{1,z}$ & $\hat{s}_1$ &   $\xi_{RD}^{(1)} $ & $p_{err,1} $  & $\|\x\|_2^2$ &
$(\x_{sol})^T\x$ \\ \hline\hline
$10  $ & $\mathbf{0.5075}  $ & $\mathbf{0.6816}  $ & $\mathbf{0.3306}  $ & $\mathbf{-0.1844}  $ & $\mathbf{0.2252}  $ & $\mathbf{0.1134}  $ & $\mathbf{0.6749}  $ & $\mathbf{0.6722}  $ \\ \hline
$13  $ & $\mathbf{0.4953}  $ & $\mathbf{0.9420}  $ & $\mathbf{0.1753}  $ & $\mathbf{-0.1314}  $ & $\mathbf{0.1594}  $ & $\mathbf{0.0456}  $ & $\mathbf{0.7009}  $ & $\mathbf{0.7628}  $ \\ \hline\hline
\end{tabular}}
\label{tab:tabavoidnum1}
\end{table}
 As the level of precision that the \bl{\textbf{Random duality theory}} achieves is often very impressive even for moderate problem dimensions the corresponding simulated values are shown in Table \ref{tab:tabavoidnum2}. We choose $\alpha=0.8$ and $n=400$.
\begin{table}[h]
\caption{\textbf{Theoretical}/\bl{\textbf{simulated}} values for various system parameters obtained based on Theorem \ref{thm:avoidcluprd1}}\vspace{.1in}
\hspace{-0in}\centering
\small{
\begin{tabular}{||c||c||c|c|c|c||}\hline\hline
$1/\sigma^2 $[db]  &  $\hat{s}_1$ &   $\xi_{RD}^{(1)} $ & $p_{err,1} $  & $\|\x\|_2^2$ &
$(\x_{sol})^T\x$ \\ \hline\hline
$10  $ & $\mathbf{-0.1844  }/\bl{\mathbf{0.1845  }}$ & $\mathbf{-0.2252  }/\bl{\mathbf{0.2252  }}$ & $\mathbf{0.1134  }/\bl{\mathbf{0.1133  }}$ & $\mathbf{0.6749  }/\bl{\mathbf{0.6769  }}$ & $\mathbf{0.6722  }/\bl{\mathbf{0.6719  }}$ \\ \hline
$13  $ & $\mathbf{-0.1314  }/\bl{\mathbf{0.1316  }}$ & $\mathbf{-0.1594  }/\bl{\mathbf{0.1594  }}$ & $\mathbf{0.0456  }/\bl{\mathbf{0.0483  }}$ & $\mathbf{0.7009  }/\bl{\mathbf{0.7005  }}$ & $\mathbf{0.7628  }/\bl{\mathbf{0.7596  }}$ \\ \hline\hline
\end{tabular}}
\label{tab:tabavoidnum2}
\end{table}
It is relatively easy to observe a very strong agreement between the theoretical and simulated values. Also, as the value in the table indicates, one has $c_2=0.6749$ (or when it comes to the simulated value $c_2=0.6769$) which is well above $0.46075$ given in (\ref{eq:discclup17b}). Given that the CLuP's objective trivially never decreases one then indeed easily has that the lower stationary point will be circumvented. This is rather clear from Figure \ref{fig:figdisc12} as well.

%%%%%%%%%%%%%%%%%%%%%%%%%%%%%%%%%%%%%%%%%%%%%%%%%%%%%%%%%%%%%%%%%
\subsubsection{Moving from $0$FL RDT to $1$FL RDT}
\label{sec:discussionclupmovingfrom0FLMLto1FLML}
%%%%%%%%%%%%%%%%%%%%%%%%%%%%%%%%%%%%%%%%%%%%%%%%%%%%%%%%%%%%%%%%%

As we have mentioned earlier, the results that we presented utilizing RDT for ML are expected to need some corrections in the low SNR regimes. We earlier showed a set of results that one can obtain utilizing the so-called 1FL RDT. They were related to $p_{err}$. In Figure \ref{fig:figdisc13} we show an analogous set of results for $\xi$. As can be seen the value of the objective $\xi$ is substantially lifted through the 1FL RDT mechanism. More importantly the relatively low value of $c_1=.7195$ where one achieves the $\xi$ minimum for 0FL RDT is now replaced by a significantly larger one $c_1=.955$. Correspondingly, one has a substantially lower estimate for probability of error as already shown in Figure \ref{fig:figdisc1}. Now, in a similar fashion one can redo the whole 1FL RDT mechanism for CLuP as well and then reanalyze all of the above functions and behavior (reemergence/disappearance) of their potential local/global optima. We will address that in one of the companion papers. However, we do mention here that in those scenarios one does not have the type of changes that we have here for the ML.
\begin{figure}[htb]
%\begin{minipage}[b]{.5\linewidth}
\centering
\centerline{\epsfig{figure=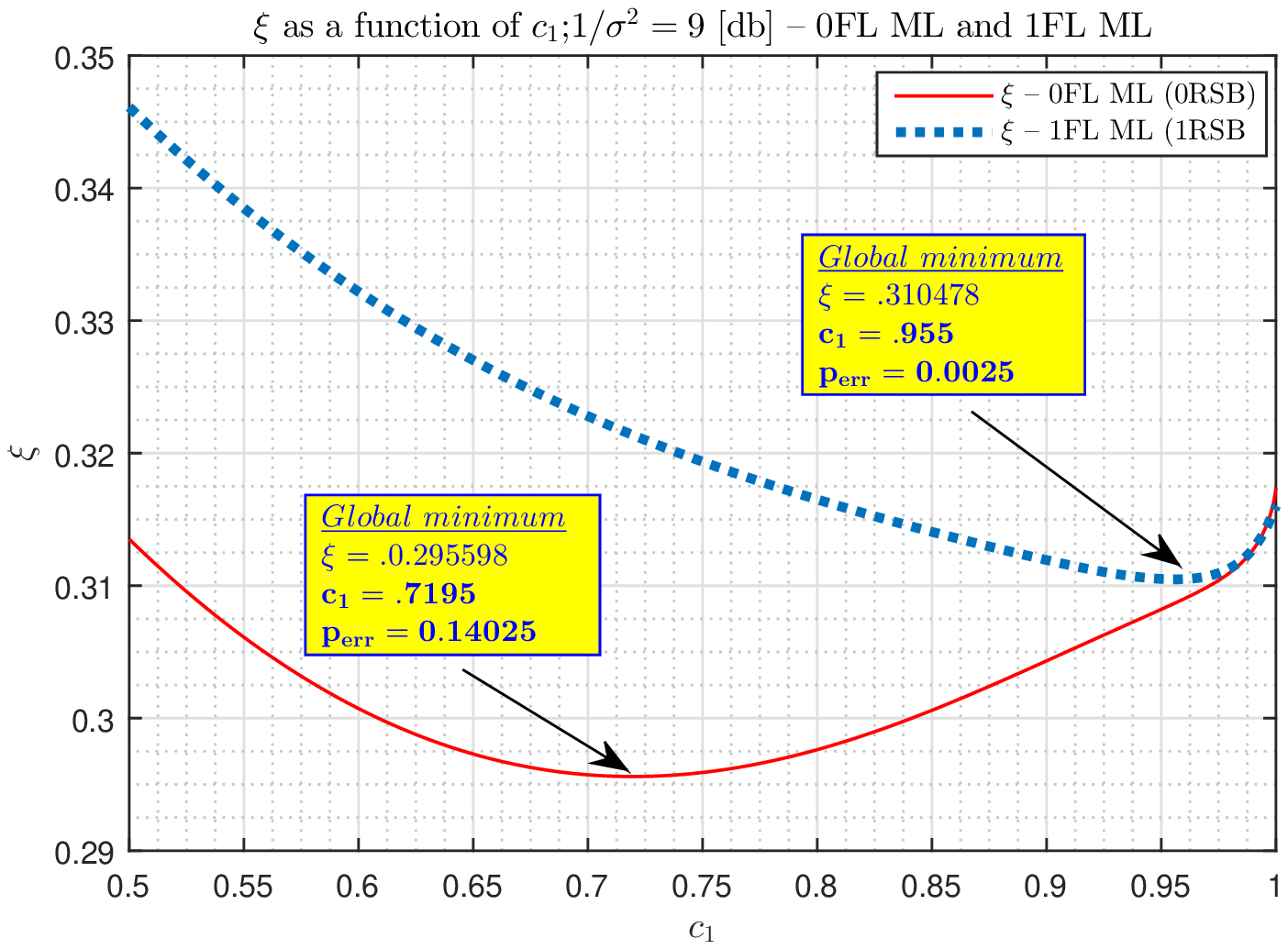,width=11.5cm,height=8cm}}
%\end{minipage}
%\begin{minipage}[b]{.5\linewidth}
%\centering
%\centerline{\epsfig{figure=finprerral08.eps,width=9cm,height=6.5cm}}
%\end{minipage}
\caption{$\xi_{RD}$ as a function of $c_1$; $\alpha=0.8$; $1/\sigma^2=9$; -- $0$FL ML and $1$FL ML RDT}
\label{fig:figdisc13}
\end{figure}

%%%%%%%%%%%%%%%%%%%%%%%%%%%%%%%%%%%%%%%%%%%%%%%%%%%%%%%%%%%%%%%%%%%%%%%%%%%%%%%%
\section{Numerical simulations}
\label{sec:numres}
%%%%%%%%%%%%%%%%%%%%%%%%%%%%%%%%%%%%%%%%%%%%%%%%%%%%%%%%%%%%%%%%%%%%%%%%%%%%%%%%

In this section we present some of the numerical results to complement the above theoretical considerations.

%%%%%%%%%%%%%%%%%%%%%%%%%%%%%%%%%%%%%%%%%%%%%%%%%%%%%%%%%%%%%%%%%%%%%%%%%%%%%%%%
\subsection{ML -- numerical experiments}
\label{sec:numresml}
%%%%%%%%%%%%%%%%%%%%%%%%%%%%%%%%%%%%%%%%%%%%%%%%%%%%%%%%%%%%%%%%%%%%%%%%%%%%%%%%

We start with the ML performance. Since the original problem (\ref{eq:ml1}) is obviously hard we implemented a fast bit-flipping algorithmic heuristic to solve it. While there is no guarantee that the solutions that we have found are optimal the results presented in Figure \ref{fig:fignum1} indicate that they may actually be very close to the optimal ones. We should also add that despite their excellent performance the analysis of these types of algorithms remains a challenge. We also complement Figure \ref{fig:fignum1} with Table \ref{tab:tabnum1} where some of the numerical values and parameters of the simulated systems are given as well. Although it should be clear by itself, we add that $0$FLML and $1$FLML in superscripts denote estimates obtained based on 0FL and 1FL RDT. As one can see from both, Figure \ref{fig:fignum1} and Table \ref{tab:tabnum1}, all of the above considerations seem to be in a very strong agreement with the results obtained through numerical experiments (the tiny differences that still remain for lowest $1/\sigma^2$ would virtually disappear on the second level of RDT, 2FL RDT).
\begin{figure}[htb]
%\begin{minipage}[b]{.5\linewidth}
\centering
\centerline{\epsfig{figure=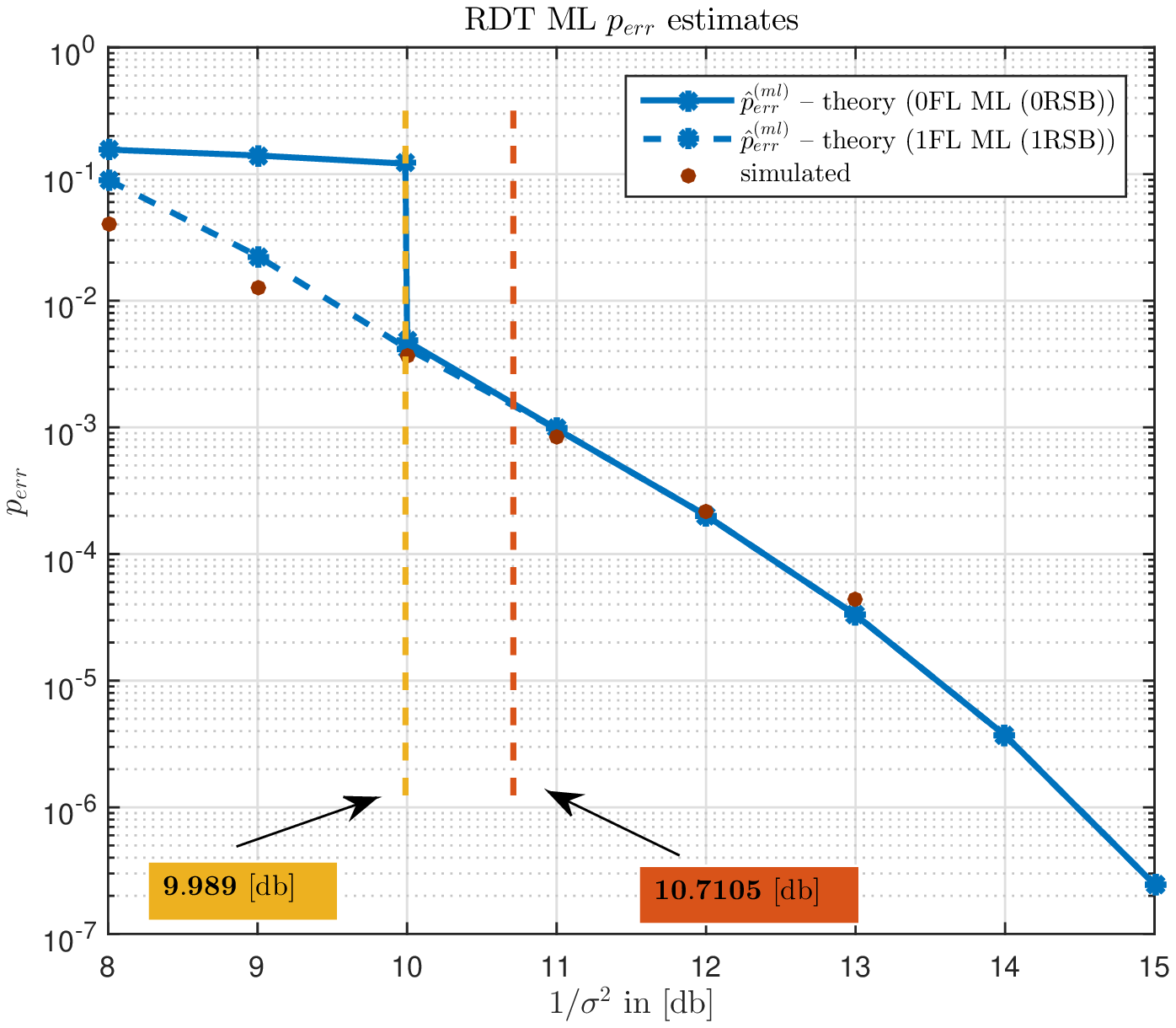,width=11.5cm,height=8cm}}
%\end{minipage}
%\begin{minipage}[b]{.5\linewidth}
%\centering
%\centerline{\epsfig{figure=finprerral08.eps,width=9cm,height=6.5cm}}
%\end{minipage}
\caption{$p_{err}^{(ml)}$ as a function of $1/\sigma^2$; $\alpha=0.8$ -- theory and simulations}
\label{fig:fignum1}
\end{figure}

\begin{table}[h]
\caption{\textbf{Theoretical}/\bl{\textbf{simulated}} values for $\xi$ and $p_{err}$; the $p_{err}$ values correspond to the data in Figure \ref{fig:fignum1}}\vspace{.1in}
\hspace{-0in}\centering
\small{
\begin{tabular}{||c||c|c|c||c|c|c||}\hline\hline
$1/\sigma^2 $[db]  & $\xi_{RD}^{(0FLML)} $ & $\xi_{RD}^{(1FLML)} $ & $\xi$--simulated & $p_{err}^{(0FLML)} $ & $p_{err}^{(1FLML)} $ & $p_{err}$--simulated \\ \hline\hline
$8  $ & $\mathbf{3.1259e-01}  $ & $\mathbf{3.3339e-01}  $ & $\bl{\mathbf{3.3854e-01}}  $ & $\mathbf{1.56e-01}  $ & $\mathbf{9.00e-02}  $ & $\bl{\mathbf{4.01e-02}}  $ \\ \hline
$9  $ & $\mathbf{2.9560e-01}  $ & $\mathbf{3.1048e-01}  $ & $\bl{\mathbf{3.1107e-01}}  $ & $\mathbf{1.40e-01}  $ & $\mathbf{2.25e-02}  $ & $\bl{\mathbf{1.29e-02}}  $ \\ \hline
$10  $ & $\mathbf{2.8092e-01}  $ & $\mathbf{2.8099e-01}  $ & $\bl{\mathbf{2.8061e-01}}  $ & $\mathbf{4.77e-03} $ & $\mathbf{4.20e-03}  $ & $\bl{\mathbf{3.74e-03}}  $ \\ \hline
$11  $ & $\mathbf{2.5162e-01}  $ & $\mathbf{2.5162e-01}  $ & $\bl{\mathbf{2.5030e-01}}  $ & $\mathbf{9.72e-04}  $ & $\mathbf{9.72e-04}  $ & $\bl{\mathbf{8.43e-04}}  $ \\ \hline
$12  $ & $\mathbf{2.2457e-01}  $ & $\mathbf{2.2457e-01}  $ & $\bl{\mathbf{2.2447e-01}}  $ & $\mathbf{2.01e-04}  $ & $\mathbf{2.01e-04}  $ & $\bl{\mathbf{2.14e-04}}  $ \\ \hline
$13  $ & $\mathbf{2.0022e-01}  $ & $\mathbf{2.0022e-01}  $ & $\bl{\mathbf{1.9994e-02}}  $ & $\mathbf{3.30e-05}  $ & $\mathbf{3.30e-05}  $ & $\bl{\mathbf{4.36e-05}}  $ \\ \hline\hline
\end{tabular}}
\label{tab:tabnum1}
\end{table}

%%%%%%%%%%%%%%%%%%%%%%%%%%%%%%%%%%%%%%%%%%%%%%%%%%%%%%%%%%%%%%%%%%%%%%%%%%%%%%%%
\subsection{CLuP -- numerical experiments}
\label{sec:numresclup}
%%%%%%%%%%%%%%%%%%%%%%%%%%%%%%%%%%%%%%%%%%%%%%%%%%%%%%%%%%%%%%%%%%%%%%%%%%%%%%%%

We now switch to the CLuP's performance. In Figure \ref{fig:fignum2} we present results obtained from numerical experiments for all the three choices of $r_{sc}$ that we considered earlier, i.e. for $r_{sc}=\{1.1,1.3,1.5\}$. We mostly focus on the SNR regime above the first line of corrections where, based on the above analysis, one is to expect a good performance. The results for $r_{sc}=\{1.1,1.3\}$ were obtained using $n=400$. To get a bit better concentrations closer to the ML for $r_{sc}=1.5$ we used $n=800$. We complement these results with the numerical values in Tables \ref{tab:tabnum2}, \ref{tab:tabnum3}, and \ref{tab:tabnum4}. In addition to the probabilities of error we in tables present results for two key CLuP parameters $c_2$ and $c_1$ as well. We again observe an excellent agreement between the theoretical predictions and the results obtained through numerical experiments. In particular, already for rather moderately small value $n=800$ the results are almost identical to the theoretical predictions.
\begin{figure}[htb]
%\begin{minipage}[b]{.5\linewidth}
\centering
\centerline{\epsfig{figure=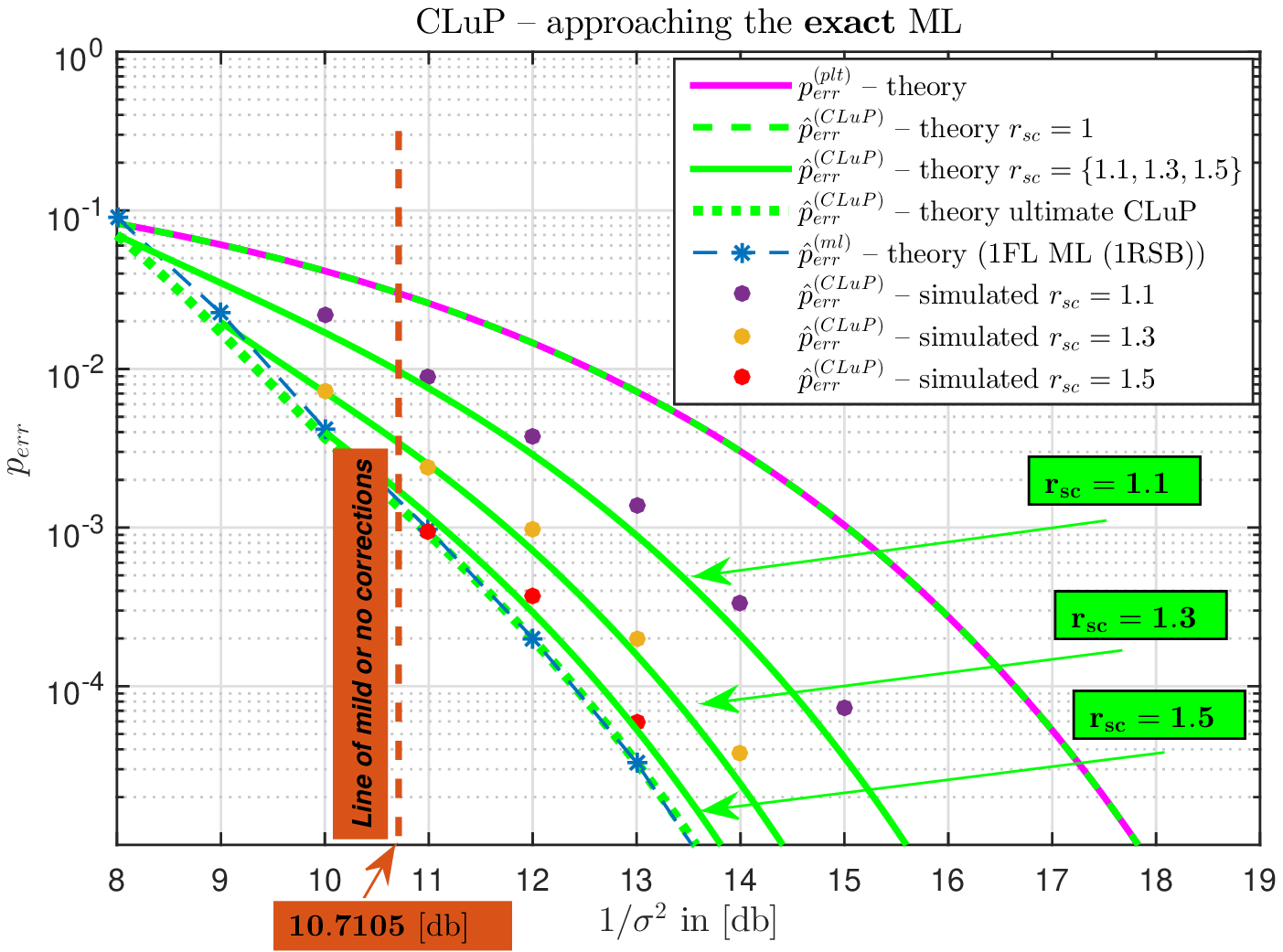,width=11.5cm,height=8cm}}
%\end{minipage}
%\begin{minipage}[b]{.5\linewidth}
%\centering
%\centerline{\epsfig{figure=finprerral08.eps,width=9cm,height=6.5cm}}
%\end{minipage}
\caption{$p_{err}$ as a function of $1/\sigma^2$; $\alpha=0.8$ -- theory and simulations}
\label{fig:fignum2}
\end{figure}

\begin{table}[h]
\caption{\textbf{Theoretical}/\bl{\textbf{simulated}} values for $c_2$, $c_1$, and $p_{err}^{(CLuP)} $; the $p_{err}^{(CLuP)} $ values correspond to the data in Figure \ref{fig:fignum2}; $\alpha=0.8$; $r_{sc}=1.1$; $n=400$}\vspace{.1in}
\hspace{-0in}\centering
\small{
\begin{tabular}{||c||c|c||c|c||c|c||}\hline\hline
$1/\sigma^2 $[db]  & $c_2 $ & $c_2$--simulated  & $c_1$ & $c_1$--simulated & $p_{err}^{(CLuP)} $ & $p_{err}^{(CLuP)} $--simulated \\ \hline\hline
$10  $ & $\mathbf{8.420e-01}  $ & $\bl{\mathbf{8.370e-01}}  $ & $\mathbf{8.820e-01}  $ & $\bl{\mathbf{8.730e-01}}  $ & $\mathbf{1.698e-02}  $ & $\bl{\mathbf{2.177e-02}} $ \\ \hline\hline
$11  $ & $\mathbf{8.520e-01}  $ & $\bl{\mathbf{8.453e-01}}  $ & $\mathbf{8.980e-01}  $ & $\bl{\mathbf{8.922e-01}}  $ & $\mathbf{7.559e-03} $ & $\bl{\mathbf{8.880e-03}}  $ \\ \hline
$12  $ & $\mathbf{8.628e-01}  $ & $\bl{\mathbf{8.600e-01}}  $ & $\mathbf{9.105e-01}  $ & $\bl{\mathbf{9.080e-01}}  $ & $\mathbf{2.886e-03}  $ & $\bl{\mathbf{3.727e-03}}  $ \\ \hline
$13  $ & $\mathbf{8.738e-01}  $ & $\bl{\mathbf{8.700e-01}}  $ & $\mathbf{9.210e-01}  $ & $\bl{\mathbf{9.180e-01}}  $ & $\mathbf{8.922e-04}  $ & $\bl{\mathbf{1.380e-03}}  $ \\ \hline
$14  $ & $\mathbf{8.845e-01}  $ & $\bl{\mathbf{8.818e-01}}  $ & $\mathbf{9.300e-01}  $ & $\bl{\mathbf{9.279e-01}}  $ & $\mathbf{2.106e-04}  $ & $\bl{\mathbf{3.341e-04}}  $ \\ \hline
$15  $ & $\mathbf{8.945e-01}  $ & $\bl{\mathbf{8.930e-01}}  $ & $\mathbf{9.377e-01}  $ & $\bl{\mathbf{9.365e-01}}  $ & $\mathbf{3.554e-05}  $ & $\bl{\mathbf{7.275e-05}}  $ \\ \hline\hline
\end{tabular}}
\label{tab:tabnum2}
\end{table}

\begin{table}[h]
\caption{\textbf{Theoretical}/\bl{\textbf{simulated}} values for $c_2$, $c_1$, and $p_{err}^{(CLuP)} $; the $p_{err}^{(CLuP)} $ values correspond to the data in Figure \ref{fig:fignum2}; $\alpha=0.8$; $r_{sc}=1.3$; $n=400$}\vspace{.1in}
\hspace{-0in}\centering
\small{
\begin{tabular}{||c||c|c||c|c||c|c||}\hline\hline
$1/\sigma^2 $[db]  & $c_2 $ & $c_2$--simulated  & $c_1$ & $c_1$--simulated & $p_{err}^{(CLuP)} $ & $p_{err}^{(CLuP)} $--simulated \\ \hline\hline
$11  $ & $\mathbf{9.350e-01}  $ & $\bl{\mathbf{9.320e-01}}  $ & $\mathbf{9.565e-01}  $ & $\bl{\mathbf{9.540e-01}}  $ & $\mathbf{2.487e-03}  $ & $\bl{\mathbf{2.421e-03}}  $ \\ \hline
$12  $ & $\mathbf{9.400e-01}  $ & $\bl{\mathbf{9.371e-01}}  $ & $\mathbf{9.622e-01}  $ & $\bl{\mathbf{9.602e-01}}  $ & $\mathbf{7.177e-04}  $ & $\bl{\mathbf{9.804e-04}}  $ \\ \hline
$13  $ & $\mathbf{9.451e-01}  $ & $\bl{\mathbf{9.432e-01}}  $ & $\mathbf{9.668e-01}  $ & $\bl{\mathbf{9.656e-01}}  $ & $\mathbf{1.575e-04}  $ & $\bl{\mathbf{1.996e-04}}  $ \\ \hline
$14  $ & $\mathbf{9.500e-01}  $ & $\bl{\mathbf{9.489e-01}}  $ & $\mathbf{9.707e-01}  $ & $\bl{\mathbf{9.700e-01}}  $ & $\mathbf{2.422e-05}  $ & $\bl{\mathbf{3.748e-05}}  $ \\ \hline\hline
\end{tabular}}
\label{tab:tabnum3}
\end{table}

\begin{table}[h]
\caption{\textbf{Theoretical}/\bl{\textbf{simulated}} values for $c_2$, $c_1$, and $p_{err}^{(CLuP)} $; the $p_{err}^{(CLuP)} $ values correspond to the data in Figure \ref{fig:fignum2}; $\alpha=0.8$; $r_{sc}=1.5$; $n=800$}\vspace{.1in}
\hspace{-0in}\centering
\small{
\begin{tabular}{||c||c|c||c|c||c|c||}\hline\hline
$1/\sigma^2 $[db]  & $c_2 $ & $c_2$--simulated  & $c_1$ & $c_1$--simulated & $p_{err}^{(CLuP)} $ & $p_{err}^{(CLuP)} $--simulated \\ \hline\hline
$11  $ & $\mathbf{9.815e-01}  $ & $\bl{\mathbf{9.805e-01}}  $ & $\mathbf{9.872e-01}  $ & $\bl{\mathbf{9.860e-01}}  $ & $\mathbf{1.187e-03}  $ & $\bl{\mathbf{9.375e-04}}  $ \\ \hline
$12  $ & $\mathbf{9.829e-01}  $ & $\bl{\mathbf{9.828e-01}}  $ & $\mathbf{9.892e-01}  $ & $\bl{\mathbf{9.889e-01}}  $ & $\mathbf{2.926e-04}  $ & $\bl{\mathbf{3.750e-04}}  $ \\ \hline
$13  $ & $\mathbf{9.843e-01}  $ & $\bl{\mathbf{9.843e-01}}  $ & $\mathbf{9.906e-01}  $ & $\bl{\mathbf{9.905e-01}}  $ & $\mathbf{5.334e-05}  $ & $\bl{\mathbf{6.000e-05}}  $ \\ \hline\hline
\end{tabular}}
\label{tab:tabnum4}
\end{table}

%%%%%%%%%%%%%%%%%%%%%%%%%%%%%%%%%%%%%%%%%%%%%%%%%%%%%%%%%%%%%%%%%%%%%%%%%%%%%%%%
\section{Summary}
\label{sec:summary}
%%%%%%%%%%%%%%%%%%%%%%%%%%%%%%%%%%%%%%%%%%%%%%%%%%%%%%%%%%%%%%%%%%%%%%%%%%%%%%%%

As the mechanisms that we presented in previous sections are a somewhat complex interplay of many factors we in this section provide a brief summary of the key points. However, as we have mentioned on multiple occasions, this is the introductory paper and we try to stay away from technical complications as much as possible. 

To start things off we in Figure \ref{fig:figsummary1} give the summary of the main performance feature discussed in the previous sections. That feature is of course the probability of error, $p_{err}$, and the figure itself is of course the highlighting Figure \ref{fig:fighighlightclup}. We first have the 1FLML and the ultimate CLuP's calculated performance curves that essentially characterize attacking the ML problem on the so-called \textbf{exact} level. We also plot the standard polynomial heuristics based on the (convex) relaxations of the given discrete ${\cal X}$ set. These include, the ball, polytope, and the SDP heuristics. As mentioned earlier, as these are convex problems their performance is relatively easy to derive through the \bl{\textbf{random duality}}. In fact, we have seen earlier that the polytope one is essentially trivial and a direct consequence of many results that we have already created, most notably those from \cite{StojnicDiscPercp13,StojnicGenLasso10,StojnicGenSocp10,StojnicPrDepSocp10,StojnicRegRndDlt10}. The analysis of the ball relaxation is even more trivial and we show the plot without even bothering to explain all these trivialities. The SDP is also relatively easy to derive, however the final results are a bit more involved and we will present them in a separate paper. All these three heuristics were introduced essentially as first steps in the branch-and-bound mechanism designed in \cite{StojnicBBSD05,StojnicBBSD08}. At that time it was observed that they substantially trail the designed branch-and-bound mechanism and on their own are essentially of no use when it comes to solving the problem exactly. With the appearance of the random duality theory these observations are also very simple to precisely mathematically characterize. As the random duality based theoretical characterizations in Figure \ref{fig:figsummary1} confirm, all three of these heuristics indeed substantially trail the ML and CLuP results. In fact, it is actually a known thing that as $\alpha$ gets smaller the failure of typical convexity type of techniques gets more pronounced. This is not necessarily particularly true only for the problem at hand but for many similar ones as well. Basically, as $\alpha$ gets smaller the problem becomes much harder and for $\alpha\rightarrow 0$ it becomes one of the hardest well-known optimization problems where hardly any known technique can succeed and convexity based ones dramatically fail. Moreover, as $\alpha$ gets smaller even CLuP can occasionally experience difficulties. However, there are ways to remedy that. They are based on designing a bit more sophisticated CLUP's variants that we will discuss in separate papers. We do however mention right here that the performance gain over the standard convex techniques becomes even more substantial as $\alpha$ gets smaller.
\begin{figure}[htb]
%\begin{minipage}[b]{.5\linewidth}
\centering
\centerline{\epsfig{figure=plotclupperrulttheoryasp08PAP.eps,width=11.5cm,height=8cm}}
%\end{minipage}
%\begin{minipage}[b]{.5\linewidth}
%\centering
%\centerline{\epsfig{figure=finprerral08.eps,width=9cm,height=6.5cm}}
%\end{minipage}
\caption{Comparison of $p_{err}$ as a function of $1/\sigma^2$; $\alpha=0.8$}
\label{fig:figsummary1}
\end{figure}

Another thing that was clear from the above considerations is that there is a $\sigma$-dependent breaking point where one may need to modify the original CLuP setup. There are many ways how this can be done and we will consider both, simple and highly advances modifications in separate papers. As this is the introductory paper we insist on the simplest possible structure. In passing we just mention that a couple of simple modifications typically useful in the lower SNR regimes include restarting the algorithms a few times with a different $\x_0$ as well as constraining additionally with $\hat{\x}^T\x\geq \hat{c}_1$ to avoid local minima over $c_1$, where $\hat{\x}$ and $\hat{c}_1$ are estimates for $\x$ and $c_1$. These can be obtained in various ways; one of them, for example, would be to utilize one of the above convex relaxation heuristics, say the polytope one. In particular, one can run say $j$ times the following
\begin{equation*}
[\bl{\x^{(i)},\x^{(CLuP,j)}}]=[\bl{\mbox{CLuP}(\y,A,r,\x^{(0,j)},\emptyset,i_{max},\delta)}]
\end{equation*}
for $j$ different $\x^{(0,j)}$ generated either randomly or in specific way and then choose the one that converges to the highest objective, i.e. produces the largest $\|\x^{(CLuP,j)}\|_2^2$. Or if one wants to be a bit more specific about avoiding particular $c_1$ local optima, one can first obtain $\x^{(0)}$ through a convex heuristic. For simplicity, say one again chooses the polytope one, i.e. sets $\x^{(0)}=\x_{plt}$ and then runs
\begin{eqnarray*}
[\bl{\x^{(i)},\x^{(CLuP)}}] & = & [\bl{\mbox{CLuP}(\y,A,r,\x_{plt},\{\x_{plt}^T\x\geq c_1^{(plt)}\},i_{max},\delta)}],
\end{eqnarray*}
(where the estimate for $c_1^{(plt)}$ is obtained through the above RDT polytope relaxation characterization), to obtain a good starting point $\x^{(CLuP)}$ that can potentially help avoiding the local $c_1$ optima in the second running of the standard CLuP
\begin{eqnarray*}
[\bl{\x^{(i)},\x^{(CLuP)}}] & = & [\bl{\mbox{CLuP}(\y,A,r,\x^{(CLuP)},\emptyset,i_{max},\delta)}].
\end{eqnarray*}

In fact, in Figure \ref{fig:fignum2} for $r_{sc}=1.3$ and $1/\sigma^2=10$[db] we have implemented this strategy as there were around $10\%$ instances where the CLuP's performance wouldn't be as expected. Since $1/\sigma^2=10$[db] is in the regime below one of the critical lines this is in a way to be expected, if for no other reason then at least because the dimensions are finite and it may happen that one runs into bad instances where big dimensions may be needed for everything to kick in (of course, the other reasons that are way more likely to cause the problems we have discussed earlier). However, with these modifications the performance got back to match exactly what the theory predicts. We also mention that for $r_{sc}=1.1$ we didn't find that this type of modification was needed for $1/\sigma^2=10$[db] but it was needed for $1/\sigma^2=9$[db]. Still, it certainly wouldn't hurt to utilize it anyway.

Another important thing that we haven't touched upon until now is of course the overall complexity of the algorithm. The reason of course is that we will  have a whole lot more to say on this topic and it will in fact be the key topic in several of our companion papers. Here we just briefly mention that in the favorable regime (above the first line of corrections) the number of iterations needed for algorithm to get to a $10^{-8}$ level of convergence (the objective difference between two successive iterates) was rarely over $20$. However, this is a huge overestimate, as the typical number might in some scenarios be even less than $10$. We should also emphasize that this is actually independent of $n$ and it depends almost exclusively on $r_{sc}$ and $\sigma$. This of course ultimately means that overall complexity is basically matching the complexity of solving a quadratic program which is clearly polynomial.

Also, we should add that here we consider the simplest possible version of the algorithm. As we have just discussed above, for example, instead of starting with $\x_0$ that is randomly generated one can choose it as a solution of one of the convex/polynomial heuristics. We will also analyze these scenarios in one of our companion papers in details. Here we just briefly mention that choosing carefully the starting $\x_0$ can be beneficial for both, handling the hard regimes below the lines of corrections as mentioned above but also for lowering the complexity (basically the number of running iterations).

Various other options are possible as well. For example, one can choose a way more sophisticated iterative updating strategies that include further modifying the objective or the constraints set. Moreover, one can also do multiple runs of CLuP with different radius. A particularly successful strategy that we have found is to successively increase the radius until one reaches the level close to ML. This type of strategy can then also be combined with all the other ones that we have already mentioned. Another very interesting option is to successively change the radius through the iterations within a single running of CLuP. In other words, instead of keeping $r$ fixed, one can have $r_i$. Such a modification can substantially improve even a single running of CLuP in any regime. A relevant technical problem that we looked at is how to carefully design the sequence $r_i$ so that the complexity is minimal and the overall performance optimal. A substantial improvement can be achieved though such a consideration as well. As we have stated on numerous occasions, since this is the introductory paper we navigated the presentation accordingly and tried to stick with the simplest possible structure of the algorithm. Obviously, we designed a way more advanced ones and we will discuss them in separate companion papers.

Finally, as it is probably obvious from the entire presentation, the mechanism that we presented in this paper is in no way restricted to the MIMO ML problem discussed here. We selected this problem to be the one where we will showcase the concepts due to its enormous importance/relevance in many scientific fields, starting with information theory and signal processing, then moving to statistics, machine learning, and many others. We have already applied it in all of these fields on a very large collection of problems. All of the above discussion and summarizing points apply to all of these problems as well and quite a few additional features emerge due to problems specifics in various different fields. We will systematically present all of these results in a large collection of companion papers.

%%%%%%%%%%%%%%%%%%%%%%%%%%%%%%%%%%%%%%%%%%%%%%%%%%%%%%%%%%%%%%%%%%%%%%%%%%%%%%%%
\section{Conclusion}
\label{sec:conc}
%%%%%%%%%%%%%%%%%%%%%%%%%%%%%%%%%%%%%%%%%%%%%%%%%%%%%%%%%%%%%%%%%%%%%%%%%%%%%%%%

In this paper we introduce a simple yet very powerful  concept for achieving in polynomial time the \textbf{exact} ML performance in MIMO systems. We refer to the concept as the Controlled Loosening-up (CLuP). It turns out that CLuP performs remarkably well. In particular, not only does it achieve an excellent performance in terms of accuracy, it actually does so rather quickly with a very small number of iterations. In fact, not only can an excellent performance be achieved through a number of iterations that is polynomial but actually a very small \textbf{fixed} number (basically independent of the problem dimensions) of iterations suffices as well.

While the structure of the algorithm is very simple and the performance is excellent, the rationale and technical foundation behind all of it do require a very careful discussion. We provided some of the key points that give a general picture as to how/why the entire mechanism actually functions. In particular, we observed that it is naturally connected to the ML performance itself. Consequently, quite a few technical features that relate to the ML do seem to find their role in the analysis and functioning of the CLuP as well. 

To be able to fully understand the underlying connection we first provided a brief Random Duality Theory (RDT) based technical analysis of the ML and then switched to the corresponding one related to the CLuP. While there are many elements of the analysis that are of great interest, we will here single out one that might be among the most important ones. In particular, there seems to be ceratin (SNR) regimes where the ML performance (or its a very close approximation) might be easier to obtain compared to how difficult it is to obatin the similar one in other regimes. We provided a theoretical characterization of these regimes as well as a discussion how they may relate to the CLuP's performance.

We also discussed various ways as to how the CLuP's performance can be reinforced in the hard regimes as well. Finally, we provided a solid set of results obtained through numerical experiments and observed that they are in a very strong agreement with what the theory predicts.

Since this is the introductory paper on this subject we limit ourselves only to the simplest possible structure of the algorithm. However, we did hint on multiple occasions that we have already explored quite a few other possibilities for building further. All these we will present in great details in a collection of companion papers.

%\newpage1
%\setcounter{page}{1}
\begin{singlespace}
\bibliographystyle{plain}
\bibliography{clupintRefs}
\end{singlespace}

\end{document}